\documentclass[aps,nofootinbib,english,prx,floatfix,amsmath,superscriptaddress,tightenlines,twocolumn]{revtex4-2}

\usepackage[utf8]{inputenc}
\usepackage[T1]{fontenc}
\usepackage{amsmath,amssymb,amsfonts, amsthm}
\usepackage{mathtools}
\usepackage{physics}
\usepackage{graphicx}
\usepackage{xcolor}
\usepackage{hyperref}
\usepackage{cleveref}
\usepackage{booktabs}
\usepackage{array}
\usepackage{dsfont}
\usepackage{subcaption}
\usepackage[font=small]{caption}
\usepackage{enumerate}
\usepackage{tikz}
\usepackage{pgfplots}
\usepackage{bbold}
\pgfplotsset{compat=1.18}
\hypersetup{
    colorlinks=true,
    linkcolor=blue,
    filecolor=magenta,      
    urlcolor=cyan,
    citecolor=red,
}

\newtheorem{theorem}{Theorem}
\newtheorem{lemma}[theorem]{Lemma}

\newtheorem{corollary}[theorem]{Corollary}
\newtheorem{definition}[theorem]{Definition}

\begin{document}

\title{Unlearnable phases of matter}

\author{Tarun Advaith Kumar}
\email{tkumar@perimeterinstitute.ca}
\thanks{These authors contributed equally to this work.}
\affiliation{Perimeter Institute for Theoretical Physics, Waterloo, ON N2L 2Y5, Canada}
\affiliation{Department of Physics and Astronomy, University of Waterloo, Ontario, N2L 3G1, Canada}
\author{Yijian Zou}
\email{yzou@perimeterinstitute.ca}
\thanks{These authors contributed equally to this work.}
\affiliation{Perimeter Institute for Theoretical Physics, Waterloo, ON N2L 2Y5, Canada}
\author{Amir-Reza Negari}
\affiliation{Perimeter Institute for Theoretical Physics, Waterloo, ON N2L 2Y5, Canada}
\affiliation{Department of Physics and Astronomy, University of Waterloo, Ontario, N2L 3G1, Canada}
\author{Roger G. Melko}
\affiliation{Perimeter Institute for Theoretical Physics, Waterloo, ON N2L 2Y5, Canada}
\affiliation{Department of Physics and Astronomy, University of Waterloo, Ontario, N2L 3G1, Canada}
\author{Timothy H. Hsieh}
\email{thsieh@pitp.ca}
\affiliation{Perimeter Institute for Theoretical Physics, Waterloo, ON N2L 2Y5, Canada}
\affiliation{Department of Physics and Astronomy, University of Waterloo, Ontario, N2L 3G1, Canada}

\date{\today}

\begin{abstract}
We identify fundamental limitations in machine learning by demonstrating that non-trivial mixed-state phases of matter are computationally hard to learn. Focusing on unsupervised learning of distributions, we show that autoregressive neural networks fail to learn global properties of distributions characterized by locally indistinguishable (LI) states.  We demonstrate that conditional mutual information (CMI) is a useful diagnostic for LI: we show that for classical distributions, long-range CMI of a state implies a spatially LI partner.   By introducing a restricted statistical query model, we prove that nontrivial phases with long-range CMI, such as strong-to-weak spontaneous symmetry breaking phases, are hard to learn. We validate our claims by using recurrent, convolutional, and Transformer neural networks to learn the syndrome and physical distributions of toric/surface code under bit flip noise. Our findings suggest hardness of learning as a diagnostic tool for detecting mixed-state phases and transitions and error-correction thresholds, and they suggest CMI and more generally ``non-local Gibbsness'' as metrics for how hard a distribution is to learn.
\end{abstract}

\maketitle



\emph{Introduction.--} As machine learning plays an increasingly central role in general, it is important to understand its limitations.  In supervised learning, the task is to learn a function from data to labels, and a notoriously challenging instance is the function which computes the parity of bit strings \cite{Blum94sqdim,kearns98}.  In unsupervised learning, the data is not labeled and the task is to learn properties of the data distribution directly.   Which data distributions are hard to learn and why?   A systematic understanding of which {\it classes} of distributions are hard to learn and their universal properties responsible for unlearnability has been lacking. 

Data is a probability distribution over a typically large domain, i.e., a classical ensemble. In many-body physics, there has been significant recent progress in defining and characterizing phases of matter in ensembles \cite{Hastings_2011, coser2019classification,sang2024mixed,markov,ma2023average,ma2023topological,sohal2024noisy,ellison2025toward,yang2025topologicalmixedstatesphases,zhang2022strange,lessa2025strong,lee2024exact,Lee2025symmetryprotected,wang2023intrinsic,de2022symmetry,vijay2025informationcriticalphasesdecoherence,sang2025mixedstatephaseslocalreversibility,fan2024diagnostics,zou2023channeling,sang2024approximate,zhang2025probingmixedstatephasesquantum,Chen2024unconventional,Lessa_2025mixed, Wang2025anomaly,yi2025universaldecayconditionalmutual,Lu_2023,PhysRevLett.131.200201,lee2022decodingmeasurementpreparedquantumphases, wang2025decoherenceinducedselfdualcriticalitytopological,Sun_2025,sala2024spontaneousstrongsymmetrybreaking,negari2025spacetimemarkovlengthdiagnostic,ma2025circuitbasedcharacterizationfinitetemperaturequantum}, both classical and quantum.  Informally, two states are in the same phase if they are connected by a short-time locally reversible process.  This emerging framework encompasses both equilibrium (local Gibbs) \cite{ma2025circuitbasedcharacterizationfinitetemperaturequantum} and non-equilibrium ensembles \cite{sang2025mixedstatephaseslocalreversibility, coser2019classification, ma2023average}, and especially for the latter, conditional mutual information (CMI) serves as an essential correlation measure for characterizing phases and transitions \cite{markov}. 
These developments offer a novel perspective for thinking about data and in particular ``phases'' of data; in \cite{hu2025localdiffusionmodelsphases, liu2025measurementbasedquantumdiffusionmodels}, the noising and denoising processes of diffusion models was linked to locally reversible processes defining phases, and CMI was used to determine whether denoising could be realized locally or not.  

In this work, we show that non-trivial mixed-state phases (certain classes of distributions) are hard for neural networks to learn.  More precisely, in the unsupervised data-driven learning paradigm whereby a neural network is given samples from a target distribution and asked to generate the same distribution, we show that global information in the non-trivial phase is not learned by the network.  All known non-trivial mixed-state phases are characterized by locally indistinguishable (LI) sets of states which have identical local observables but differ by global observables.  Our claim is that neural networks can learn at most a mixture of such LI states and cannot learn the global observable.  We also find a fundamental relation between LI and CMI for classical distributions: long-range CMI in a state implies the existence of a locally indistinguishable partner, and the reverse implication holds for one-dimensional geometries. 

Technically, we introduce a restricted statistical query learning model that captures the essential features of gradient based data-driven training in autoregressive neural networks. We prove that locally indistinguishable states cannot be learned efficiently within this model, establishing polynomial-time lower bounds that apply to a wide class of physically relevant systems.  On the other hand, we show that one-dimensional distributions with short-ranged CMI can be learned efficiently, consistent with results on Gibbs state learning \cite{chen2025learningquantumgibbsstates}. We then introduce a simple toy model of learning which concretely illustrates the vanishing gradients resulting from LI.  Finally, we provide extensive numerical evidence supporting our theoretical predictions on several architectures (recurrent neural networks (RNN), convolutional neural networks (CNN), and Transformers), and across different target systems, including syndrome distributions and physical distributions of quantum and classical error-correcting codes with noise.



Our work has important implications for both machine learning and physics.  For machine learning, it identifies LI and CMI as important concepts and measures for characterizing hardness of learning, and this could potentially be applied to models trained on real-world data or suggest new approaches for AI safety.  For physics, the hardness of learning serves as a novel computational probe for mixed-state phases and transitions, offering a new diagnostic for both numerical simulations and quantum experiments. As many mixed-state phase transitions are associated with error correction and fault  tolerant thresholds \cite{negari2025spacetimemarkovlengthdiagnostic},
(un)learnability can also be used to detect such error thresholds from both numerics and experiments.

We briefly comment on several related insightful works and their differences with this work. Ref.~\cite{yang_cmi_nns} showed that short-ranged CMI distributions can be efficiently {\it represented} by neural networks; we focus on efficient {\it learnability} and borrow techniques from Ref.~\cite{yang_cmi_nns} for this purpose. Refs.~\cite{Feng_2025,schuster2025hardnessrecognizingphasesmatter} consider the task of identifying the phase of matter of a given state, and they establish hardness by finding sufficiently randomized trivial states that would be recognized as non-trivial; in contrast, our task is data-driven learning of the state/distribution itself.  Refs.~\cite{Barratt_2022,singh2025mixedstatelearnabilitytransitionsmonitored} consider monitored circuit dynamics and the learning task of inferring a global charge in the initial state from the circuit measurements, and they map this learning transition to a mixed state transition.  While the learnability in our context also involves whether global information can be inferred or not, our learning framework is very different, involving neural network training driven directly by data from the target distribution, as opposed to repeated measurements of a state.  
\begin{figure}[t]
\label{fig:diagram_learning}
    \includegraphics[width=\linewidth]{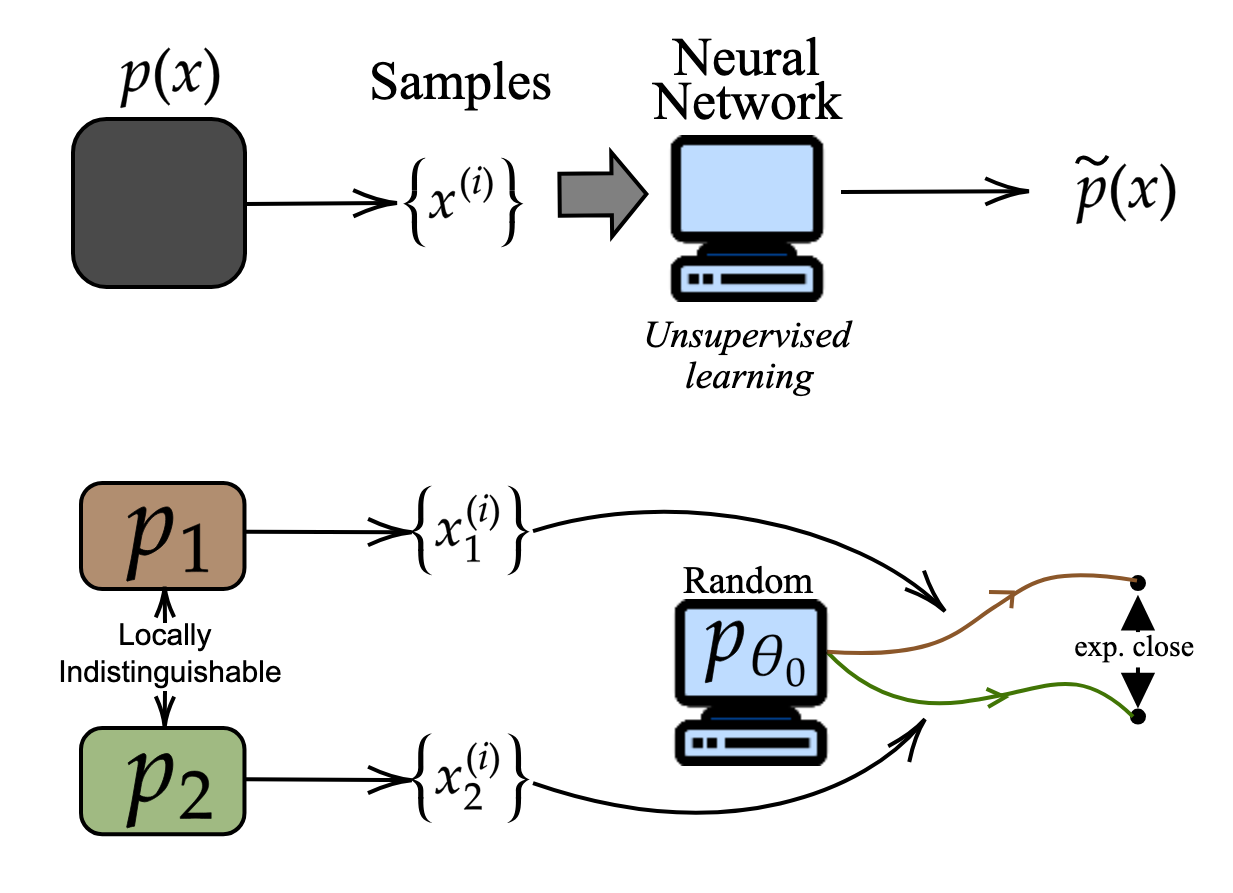}
\caption{\textbf{Unsupervised learning task and local indistinguishability obstruction}. Given samples coming from a black box distribution $p(x)$, our learning problem is to reconstruct the distribution using an autoregressive neural network and unsupervised learning. Training with samples from locally indistinguishable distributions results in exponentially close training trajectories.}
\end{figure}

\emph{Unsupervised learning setup.--} We begin by reviewing the setup for distribution learning, specifically in the context of unsupervised density estimation. We aim to approximate a target distribution $p(x)$ of bitstrings $x \in \{0,1\}^{\otimes n}$ with a variational ansatz $p_{\theta}(x)$ parametrized by $\theta$. The training data is a set of bitstrings, $x^{(1)},x^{(2)}\cdots x^{(M)}$ sampled independently from the distribution $p(x)$. The loss function is given by the negative log-likelihood,
\begin{equation}
\label{eq:loss}
    L_{\theta} = -\frac{1}{M}\sum_{k=1}^M \log p_{\theta}(x^{(k)}).
\end{equation}
The KL divergence $D_{KL}(p||p_{\theta}) = \sum_{x} p(x) \log p(x) -p(x)\log p_{\theta}(x)$, where the second term is approximated by $L_{\theta}$ for large enough $M$. Therefore, minimizing $L_{\theta}$ is equivalent to minimizing the KL divergence.

The training is often achieved through noisy gradient descent (NGD). At time step $t$ we evolve the parameters
\begin{equation}
\label{eq:NGD}
    \theta{(t+1)} = \theta{(t)} - \gamma \nabla_{\theta} L_{\theta{(t)}} + Z_t
\end{equation}
where $\gamma$ is the learning rate, and $Z_t$ is a Gaussian random vector with zero mean and small variance $\sigma^2 = 1/\mathrm{poly}(n)$. 

\emph{Local statistical query model.--} The noisy gradient descent Eq.~\eqref{eq:NGD} for large enough\footnote{The reduction to SQ does not hold for small sample sizes in gradient descent or small batch sizes in stochastic gradient descent. On the contrary, it has been shown that this leads to simulation of any PAC learning, which is stronger than SQ learning \cite{abbe2020polytimeuniversalitylimitationsdeep,abbe2022powerdifferentiablelearningversus}. } sample size $M$ is an instance of a statistical query (SQ) model \cite{abbe2020polytimeuniversalitylimitationsdeep,abbe2022learningreasonneuralnetworks,abbe2022powerdifferentiablelearningversus,Blum94sqdim,diakonikolas2022optimalsqlowerbounds,Blum03SQ,Aslam93SQ,feldman2016statisticalalgorithmslowerbound,reyzin2020statisticalqueriesstatisticalalgorithms} of learning.  In SQ learning, we are given SQ oracles with tolerance $\tau$: an oracle takes an input query function $\phi(x): \{0,1\}^{\otimes n}\rightarrow [-1,1]$ and returns a value $v$ such that $v\in [\mathbb{E}_{x\sim p(x)} (\phi(x))-\tau, \mathbb{E}_{x\sim p(x)} (\phi(x)) +\tau]$. The gradient term in Eq.~\eqref{eq:NGD} is the query result of $\phi_{\theta}(x) = \nabla_{\theta}\log p_{\theta}(x)$ that depends on the variational parameters $\theta(t)$. The tolerance $\tau \sim M^{-1/2}$ comes from the estimation of the expectation value using samples. 

In this work, we define the $\ell$-local statistical query (SQ) model, where we require the query function $\phi(x)$ to depend on at most $\ell$ spins. This locality restriction is central to our framework and can be justified through several complementary perspectives. 

First, from an empirical standpoint, many autoregressive neural networks such as RNNs, CNNs, and transformers exhibit simplicity bias---they prioritize learning local features over global ones. Second, from a physical perspective, when the target distribution is locally indistinguishable (LI) with a local Gibbs state, $e^{-\beta H_{local}}/\Tr[e^{-\beta H_{local}}]$ where $H_{local}$ is a sum of $l$-local operators, we show that noisy gradient descent training (Eq.~\eqref{eq:NGD}) is consistent with local SQ learning (see SM for details). The underlying intuition is inductive. The initial state $\rho_{\theta(0)}$ is a local Gibbs state due to random initialization. Furthermore, the loss function Eq.~\eqref{eq:loss} cannot tell the difference between LI states if $p_{\theta}$ is a local Gibbs state. As a result, the entire training trajectory would be within the space of local Gibbs states and can be simulated by local SQ. 
Third, for certain architectures such as RNNs in the contractive regime, we can explicitly demonstrate that the query function is local (see SM for details).

We note that the notion of locality is architecture-dependent. For RNNs, locality is naturally defined along the quasi-1D autoregressive path, and this is a stronger notion of spatial locality (hereafter referred to as ``$s$-locality''). For deep CNNs and transformers, locality does not necessarily refer to geometric locality but rather to few-body ``$l$-locality'' (or low-weight functions in the Fourier-Walsh expansion of $\phi(x)$).     

Within this framework, we say a distribution $p(x)$ is easy to learn under local SQ if there exists an algorithm that outputs a distribution $\tilde{p}(x)$ using $T = \mathrm{poly}(n)$ local SQ oracle queries with noise level $\tau = 1/\mathrm{poly}(n)$, such that $\|p - \tilde{p}\| = 1/\mathrm{poly}(n)$.

\emph{Learning hardness from local indistinguishability.--} The key physical concept that determines hardness of learning under local SQ is local instinguishablity (LI). Two states $\rho_1$ and $\rho_2$ are LI if (i) for all operators $\phi$ supported on $l_{\phi}$ qubits,
\begin{equation}
\label{eq:LI}
    |\tr((\rho_1 -\rho_2)\phi)| \leq \delta,
\end{equation}
where $\delta \leq O(e^{-n^{1-c_0}/l_{\phi}})$ for a small constant $c_0<1$ 
and (ii) they are globally distinguishable:
\begin{equation}
\label{eq:LI2}
    ||\rho_1-\rho_2||_1 \geq C'
\end{equation}
for some constant $C'$. Again, one can define a stronger notion of spatial locality for LI, called $s$-LI, in which the support of all operators $\phi$ are spatially local.  Intuitively, under $l$-local SQ access with tolerance $\tau = \mathrm{poly}(1/n)$, the oracle responses for $\rho_1$ and $\rho_2$ are statistically indistinguishable, as the expectation values $\tr(\rho_1 \phi)$ and $\tr(\rho_2 \phi)$ differ by at most exponentially small amounts, which is negligible compared to the tolerance $\tau$. 
More formally, we can define a family of LI states $\rho(p_1) = p_1\rho_1 + (1-p_1) \rho_2$, where $p_1$ is a uniformly random variable on $[0,1]$. LI constrains the mutual information between $p_1$ and the oracle result $\mathbf{v}$ to be super-polynomial small:  $I(\mathbf{v}, p_1) \leq T\delta/(4\tau)$.  This implies that at most a superpolynomially small neighborhood of one state in the LI family is easy to learn. 

In the context of training the variational ansatz $p_{\theta}(x)$, we consider the training trajectory $\theta(t)$ under the noisy gradient descent Eq.~\eqref{eq:NGD}. 
Given that the NGD training is described by local SQ, we show that the distributions over training trajectories $\mathbb{P}_1(\theta(t))$ and $\mathbb{P}_2(\theta(t))$ for LI states $\rho_1$ and $\rho_2$, respectively, are superpolynomially close: $||\mathbb{P}_1 - \mathbb{P}_2||_1 \leq \frac{\gamma \delta \sqrt{T}}{2\sigma}$ (see appendix for details of the proof).


\emph{Easiness of learning quasi-1d short-range CMI states. --}
In contrast to the hardness result above, we show that quasi-1d distributions with short-ranged CMI are efficiently learnable under local SQ.

Consider a tripartition of the distribution into $A,B$ and $C$ such that $B$ acts as a buffer region between $A$ and $C$. The decay of the conditional mutual information with the size of $B$ defines the Markov length $\xi$,
\begin{equation}
    I(A:C|B) \leq A \exp(-\frac{|B|}{\xi}),
\end{equation}
and the distribution has short-ranged CMI if $\xi$ is finite.
To establish efficient learnability, we make use of Theorem B.1 in Ref.~\cite{yang_cmi_nns}, which builds the full distribution from its marginals. We partition the system into intervals $A_1, A_2 \cdots A_m$ and use local SQ oracles to perform tomography on each interval $A_i A_{i+1}$ and obtain an approximation of the marginal probability distribution $\tilde{p}_{A_i A_{i+1}} \approx p_{A_iA_{i+1}}$. We choose each interval to have length $l \propto \xi \log n$ and show that each $p_{A_iA_{i+1}}$ can be obtained with $\mathrm{poly}(n)$ local SQ oracles with $\tau = 1/\mathrm{poly}(n)$. Finally, the true distribution $p$ is then approximated by multiplying marginals 
\begin{equation}
\label{eq:MarkovRecon}
    \tilde{p}_{A_1\cdots A_m} = \tilde{p}_{A_1 A_2} \prod_{i=2}^{m-1} \tilde{p}_{A_{i+1}|A_{i}}
\end{equation}
We derive a bound on the error $||p-\tilde{p}||_1 = 1/\mathrm{poly}(n)$ from Theorem B.1 in Ref.~\cite{yang_cmi_nns} and a technical result which bounds error propagation in the Markovian reconstruction Eq.~\eqref{eq:MarkovRecon} of classical distributions (see details in SM). 

\emph{Hardness of learning long-range CMI states.---}
Consider first, for simplicity, a classical distribution $p(x)$ of $n$ spins on a 1D chain. We define a state to have long-range CMI if there exist contiguous intervals $A$, $B$, $C$ such that
   $ I(A:C|B) = O(1)$
where $B$ separates $A$ and $C$ and $|B| = O(n)$. We establish an equivalence between long-range CMI and spatially local indistinguishability ($s$-LI). 

\emph{(a) From long-range CMI to $s$-LI:} If a distribution $p$ has long-range CMI with $|A| = O(1)$, then we can construct a reference distribution $p_{\mathrm{ref}} = p_{AB}p_{C|B}$ that is $s$-LI with $p$ on any contiguous region of size $|B|$. 

\emph{(b) From $s$-LI to long-range CMI:} Conversely, if two distributions $p$ and $q$ are $s$-LI on any contiguous interval of $O(n)$ spins, then at least one of them has long-range CMI (see proofs in SM). 

As a consequence, $s$-local learnability provides a probe for phase transitions between short-range and long-range CMI distributions: short-range CMI distributions are easy to learn under $s$-local SQ, while long-range CMI distributions are hard to learn due to their $s$-LI property.  We emphasize that long-range CMI in a distribution $p(x)$ can arise from $\log p(x)$ having either low-weight (few-body) but spatially non-local components or high-weight components.  In the first case, $s$-local learning is hard, but $l$-local learning may be efficient.  In the second case, we expect that even $l$-local learning is hard, as is the case for the examples studied later. 

The analysis extends to higher-dimensional lattices, though the equivalence is more subtle. Direction (a), long-range CMI $\Rightarrow$ $s$-LI, holds when CMI is computed from a concentric partition where $A$ and $AB$ are concentric balls. Direction (b), $s$-LI $\Rightarrow$ long-range CMI, holds for a quasi-1d partition where $A$ and $B$ are adjacent strips. However, direction (b) \emph{does not hold} for a concentric partition: topologically degenerate states (e.g. ground states of toric code) provide examples that are $s$-LI yet have short-range CMI in a concentric partition. Thus, in higher dimensions, states with long-range CMI on concentric partitions remain hard to learn, but hardness can also arise from topological degeneracy even for states with short-ranged CMI in a concentric partition.

\begin{figure}
    \centering
    \includegraphics[width=\linewidth]{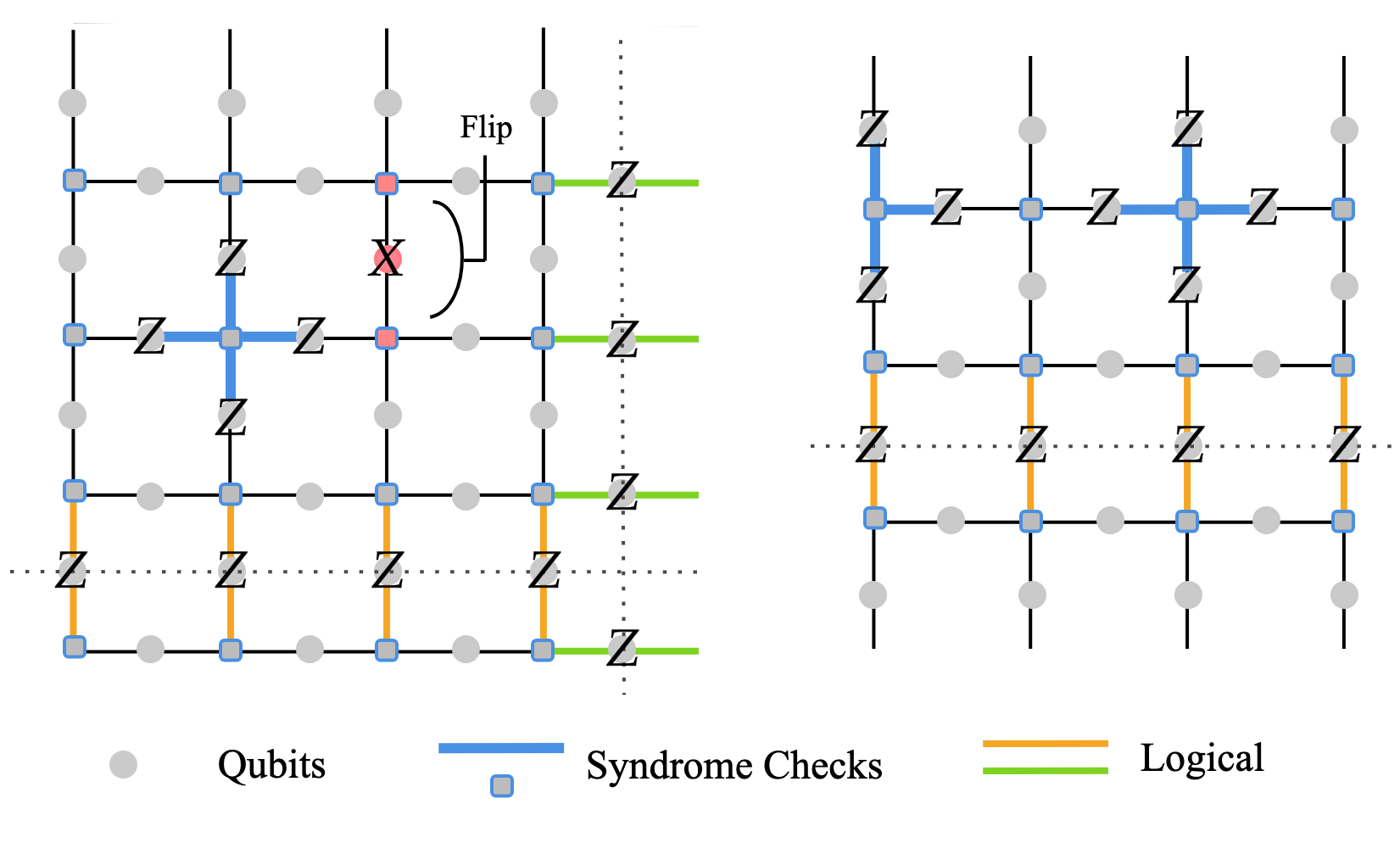}
    \caption{(left) For toric code with bit-flip error on each edge qubit, the relevant syndrome checks and logical operators are shown.  Each bit-flip flips the two syndromes on the nearest vertices. (right) For surface code (open boundary conditions) with bit-flip error, the relevant syndrome checks and logical operators are shown.}
    \label{fig:ToricAndSurfaceCode}
\end{figure}

\emph{Unlearnable phases.--} 
As LI and CMI play an essential role in characterizing mixed-state phases \cite{markov}, the notion of unlearnable states naturally extends to unlearnable phases (families of states with the same qualitative properties). Ref.~\cite{sang2025mixedstatephaseslocalreversibility} defined two mixed states to be in the same phase if they are connected by a low-depth locally reversible channel (in which Markov length remains finite after every operation in the channel).  Moreover, Ref.~\cite{sang2025mixedstatephaseslocalreversibility} showed that the set of LI states with finite Markov length is robust under a locally reversible channel $\mathcal{N}$. The key idea is that the same low-depth channel $\mathcal{D}$ reverses the action of $\mathcal{N}$ on all the LI states, which is sufficient to prove the robustness of LI (see SM for details). Hence, entire phases of LI states (such as topological order) are hard to learn.
Given a global symmetry $P$, a mixed-state can exhibit ``strong-to-weak spontaneous symmetry breaking'' (SWSSB) \cite{lessa2025strong} and have infinite Markov length.  In this case, the set of LI states with different global symmetry charge is also a robust property of the phase.  For example, consider a global Ising symmetry $P=\prod Z$ and a quintessential example of SWSSB, $\sigma_1 = (I + P)/2$, which is the ensemble of all bit strings with even total parity.  $\sigma_1$ is locally indistinguishable from  $\sigma_2 = (I -P)/2$ (bit strings with total odd parity). For any state $\rho_1$ in the same phase as $\sigma_1$, there must be a low-depth strongly-symmetric\footnote{Every gate in the channel must commute with the symmetry $P$.} channel $\mathcal{N}$ such that $\rho_1 \approx \mathcal{N}(\sigma_1)$. Then there exists another state $\rho_2 = \mathcal{N}(\sigma_2)$ that is LI with $\rho_1$, where Eq.~\eqref{eq:LI} follows from the dual channel $\mathcal{N}^{*}$ mapping local operators to local operators and Eq.~\eqref{eq:LI2} follows from the distinct symmetry charges (which are preserved by strongly-symmetric channels). Therefore, the entire strong-to-weak SSB phases are hard to learn by local SQ models.

\emph{Learning a 1d toy model. --} We first illustrate the hardness of learning a nontrivial mixed-state and its relation to error correction in a toy model. Consider a repetition code with codewords $0,..,0$ and $1,..,1$. The syndrome checks are $S_i := Z_i Z_{i+1}$.
A bit-flip error on site $i$ flips the sign of two adjacent syndromes $S_i,S_{i-1}$. Consider a codeword subject to independent bit-flip errors on each site with probability $p$; this results in a syndrome distribution $\rho_p = \sum_{\vec{s}} p(\vec{S}) |\vec{S}\rangle\langle \vec{S}|$. Equivalently, $\rho_p$ can be obtained by applying correlated bit flips  with probability $p$ to each adjacent pair of bits initialized in all $0$ state. Note that since syndromes are flipped in pairs, $p(\vec{S})$ is nonzero only when there are an even number of $-1$ syndromes; the dynamics and states preserve a global Ising symmetry $P=\prod_i S_i=1$.

Optimal decoding involves pairing the defect positions (where $S_i = -1$) and applying physical bit flips along the segments connecting each pair, choosing the minimum-weight perfect matching. As $p$ increases, the typical length of these pairing segments grows.  At the decoding threshold $p=1/2$, logical information is lost because multiple pairings become equally likely, and $\rho_{1/2}= \frac{1+P}{2}$.
As noted in the previous section, this distribution exhibits long-range CMI and SWSSB, and its LI state with short-range CMI is the uniform distribution $\mathbb{1}/2^n$. For $p < 1/2$, the Markov length $\xi$ is finite but diverges as $\xi \sim (1/2 - p)^{-2}$ upon approaching the threshold.

Consider a variational ansatz obtained by starting from $\rho_p$ and applying independent bit-flip noise of rate $r$ to each {\it syndrome} bit, yielding $\rho_{p;r} = \mathcal{N}_r(\rho_p)$. This family corresponds to a restricted class of recurrent neural networks with hidden dimension 4, as $\rho_{p;r}$ is a matrix-product density operator that can be mapped to RNNs via the formalism of Ref.~\cite{Wu_2023_MPSRNN}.

As a toy model of learning, we train this ansatz $\rho_{p;r}$ via gradient descent to approximate the target $\rho_{p}$.  For the target $\rho_{p=1/2}$, the gradient of the loss with respect to $r$ satisfies $\nabla_r L_r \sim (1-2r)^n$ near typical initializations. This gradient vanishes exponentially with system size unless $r=0$ exactly, implying exponentially slow training whenever the optimizer is initialized away from the target—typically near the LI uniform distribution $\mathbb{1}/2^n$. The difficulty persists even for targeting finite but large Markov length $\xi$ below threshold, in which $\nabla_r L_r \sim \xi^{-2}$ at $r\approx 1/2$ (see SM), showing that learning becomes harder as $\xi$ increases. Thus, the onset of unlearnability coincides precisely with the error-correction threshold and the divergence of the Markov length.

\begin{figure}[t]
\begin{subfigure}{0.48 \textwidth}
    \includegraphics[width=\linewidth]{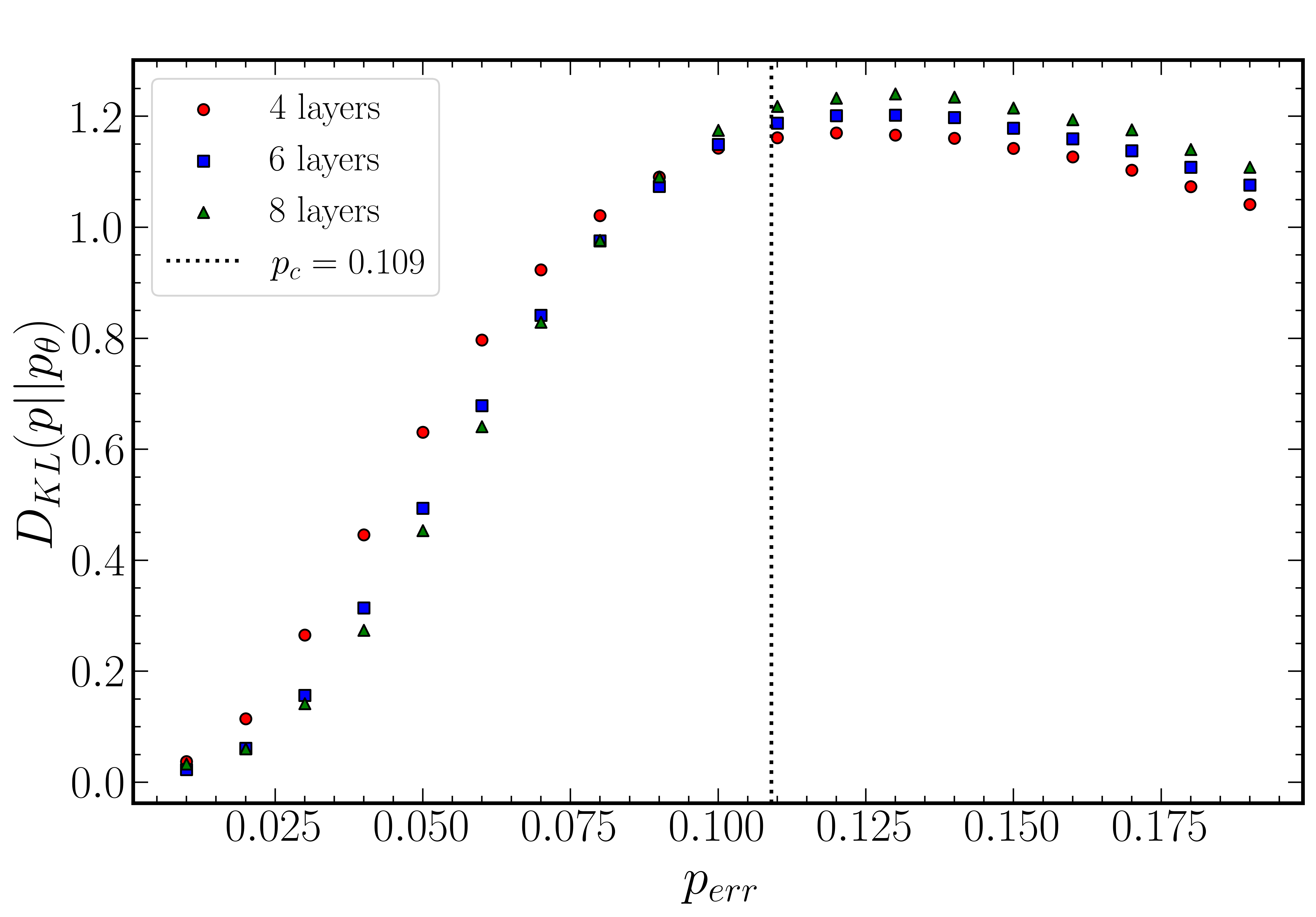}
\end{subfigure}
\hspace{2mm}
\begin{subfigure}{0.48 \textwidth}
    \includegraphics[width=\linewidth]{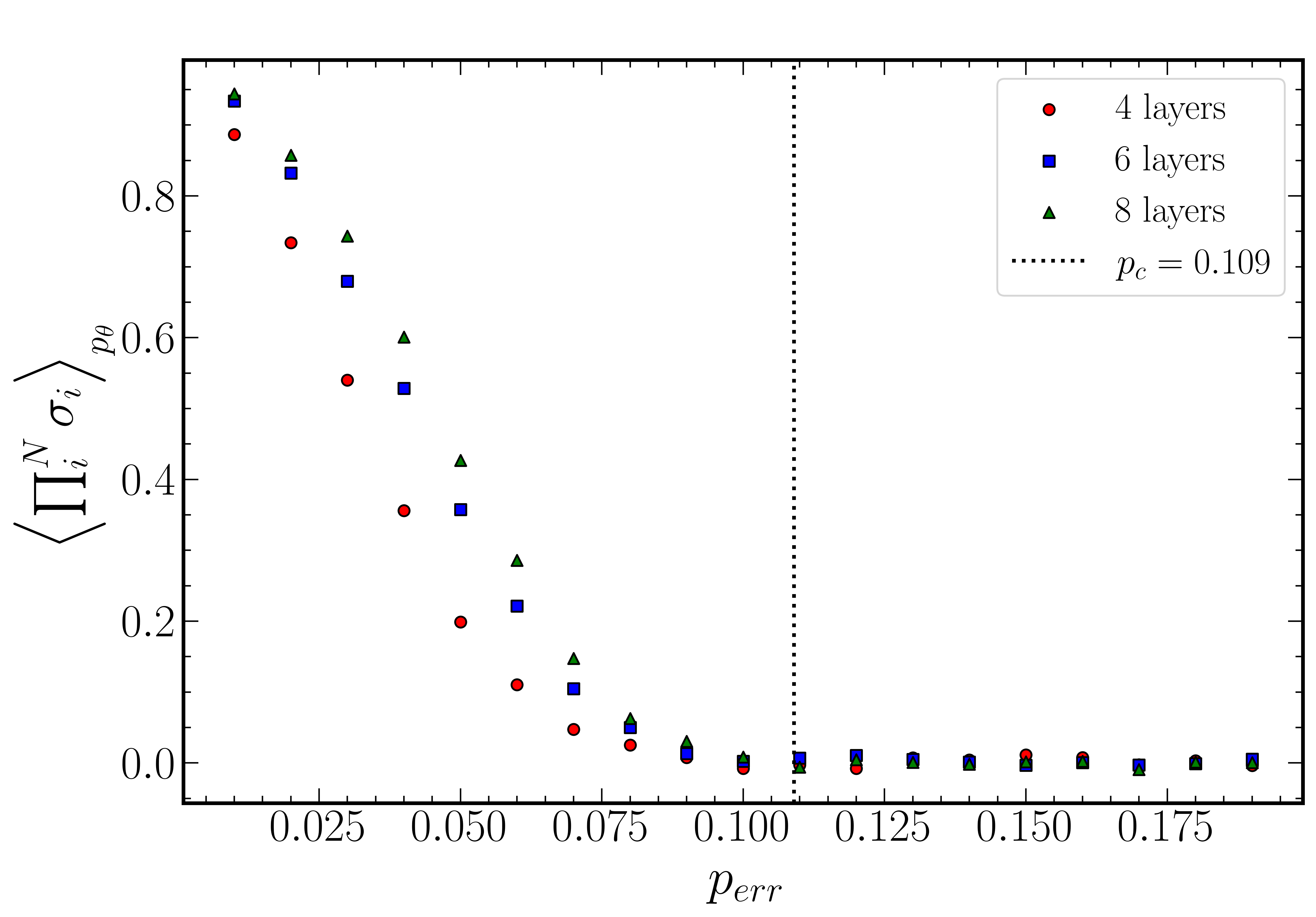}
\end{subfigure}
\caption{\textbf{Learning the syndrome distribution of noisy toric code}. (top) KL divergence between true distribution and trained 2D CNN with residual connections for the toric code vertex syndromes from bit flip error rate $p_{err}$ on 7 by 7 lattice (of syndromes bits). For $p_{err} > 0.109$, the syndrome distribution is in a SWSSB phase and is hard to learn. 
Specifically, (bottom) in the SWSSB phase, the neural network fails to learn the global parity of the true distribution.}
\label{fig:toric_syndrome}
\end{figure}

\textit{Learning the syndrome distribution of the toric code.}-- The toric code is a quantum error correcting code with qubits on the edges of a square lattice and syndrome checks $A_p=\prod_{i\in p} X_i$ and $B_v \equiv \prod_{i\in v}Z_i$, where $p,v$ denote plaquettes and vertices. With periodic boundary conditions, the codespace (in which all the checks are satisfied) is four-dimensional, encoding two logical qubits with logical operators ${\tilde X}_{1,2}, {\tilde Z}_{1,2}$ consisting of products of Pauli $X$ and $Z$ operators, respectively. 
We consider the action of uniform $X$ dephasing, flipping each qubit with probability $p_{err}$
\begin{equation}
\label{eq:bitflip}
    \mathcal{N}^{[i]}_{p_{\text{err}}}(\rho) = (1-p_{err})\rho + p_{err} X_i \rho X_i,
\end{equation}
on a codestate in which all syndromes are initially trivial.
See Fig. 2 (left), which illustrates the $Z$ elements only, as the $X$ elements are not affected by this noise model.

The syndrome distribution $\rho_{\text{syn}}(p_{err})$ is the (classical) distribution of $B_v$ syndromes resulting from noise rate $p_{err}$. Physically, the distribution can be obtained by first initializing the syndrome bits for all vertices to all $0$, and applying with probability $p_{err}$ the $X_{i} X_{j}$ bit flip on each edge $\langle ij\rangle$ (since each physical qubit flip violates the two nearest checks). The total number of syndromes is even for all states in the ensemble, thus $\rho_{\text{syn}}$ has a strong $\mathbb{Z}_2$ symmetry, implying $\langle \prod_i Z_i\rangle = 1$.

As shown in Ref.~\cite{lessa2025strong}, $\rho_{\text{syn}}$ exhibits a phase transition from trivially symmetric to SWSSB, and the transition can be mapped to the 2d random bond Ising model on the Nishimori line. When $p_{\text{err}}>p_c\approx 0.11$, the resulting distribution lies in the SWSSB phase and is two-way connected by a strongly-symmetric finite-depth channel to the parity distribution at $p_{\text{err}}=1/2$. In this phase, we have long-range CMI and there exists a LI partner $\tilde{\rho}_{\text{syn}}$ which is odd under the $\mathbb{Z}_2$ symmetry. Our theory predicts that only the equal mixture of $\rho_{\text{syn}}$ and $\tilde{\rho}_{\text{syn}}$ is efficiently learnable. In particular, the expectation value of the global parity $\langle \prod_i Z_i\rangle$ is not efficiently learnable through local SQ. 

We use a CNN ansatz $p_{\theta}$ to learn the distribution and the result is shown in Fig.~\ref{fig:toric_syndrome}. We have also used RNN and transformer and obtain similar results (see SM).
Both the KL divergence $D_{KL}(p_{\text{syn}}||p_{\theta})$ and the error in $\langle \prod_i Z_i\rangle$ decrease as we increase the CNN depth for a fixed training time if $p<p_c$, while for $p>p_c$ the KL divergence is lower bounded by $\log 2$ and the expectation value $\langle \prod_i Z_i\rangle_{\rho_{\theta}} = 0$ in the final training step. This agrees with our prediction that the global strong symmetry charge is hard to learn in the SWSSB phase. 

Our result provides a way to diagnose both SWSSB phases (and error correction thresholds) through samples of the distribution: the SWSSB phase is characterized by the failure of learning the strong symmetry charge under unsupervised learning with neural networks. Importantly, we do not need access to the value of $p(x)$. This is in contrast with other measures, such as the fidelity correlator and the CMI, which typically require strong simulation of $p(x)$ for efficient computation. 

\begin{figure}[t]
\begin{subfigure}{0.48 \textwidth}
    \includegraphics[width=\linewidth]{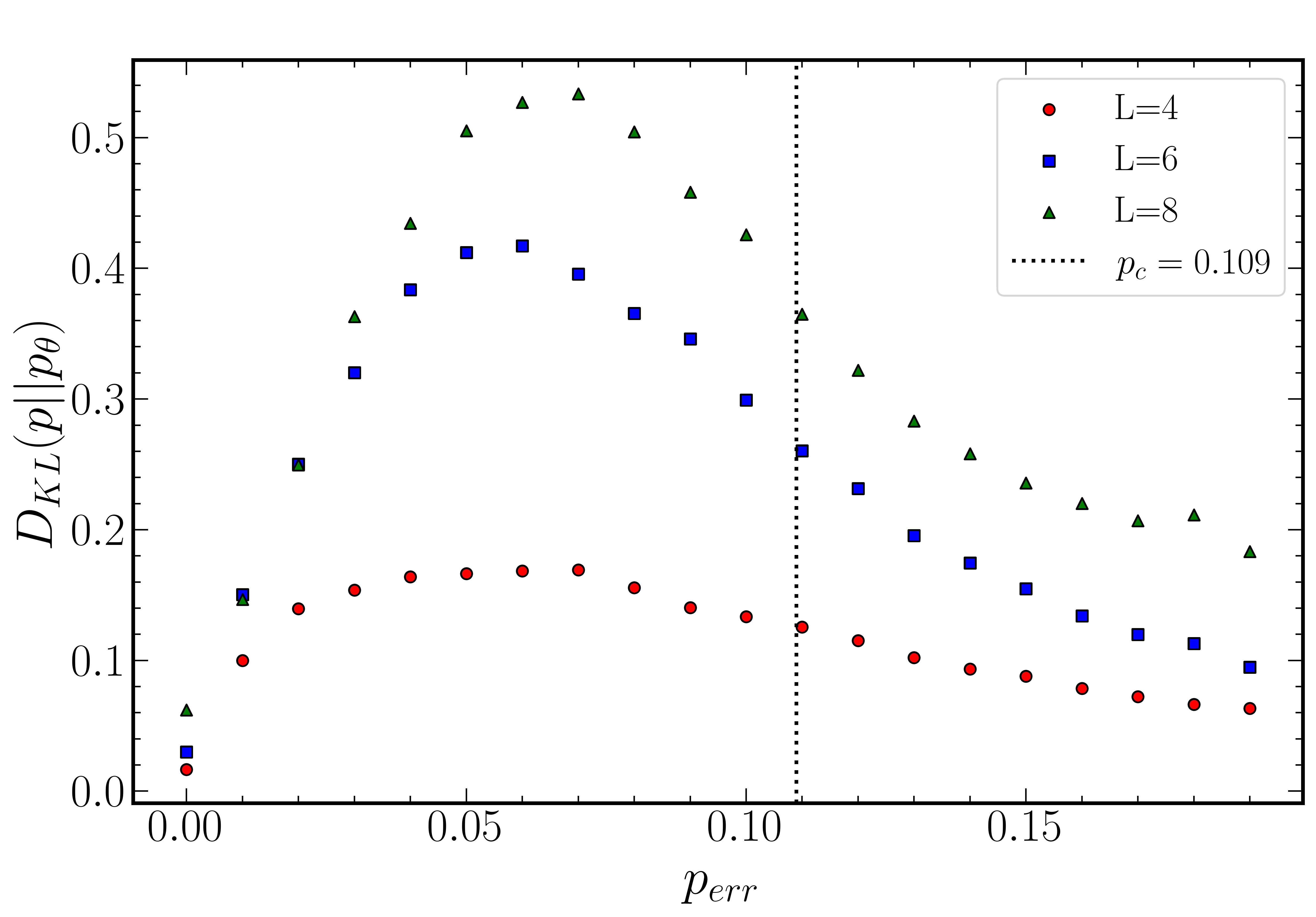}
    \caption{}
    \label{fig:loopModel1}
\end{subfigure}
\hspace{2mm}
\begin{subfigure}{0.48 \textwidth}
    \includegraphics[width=\linewidth]{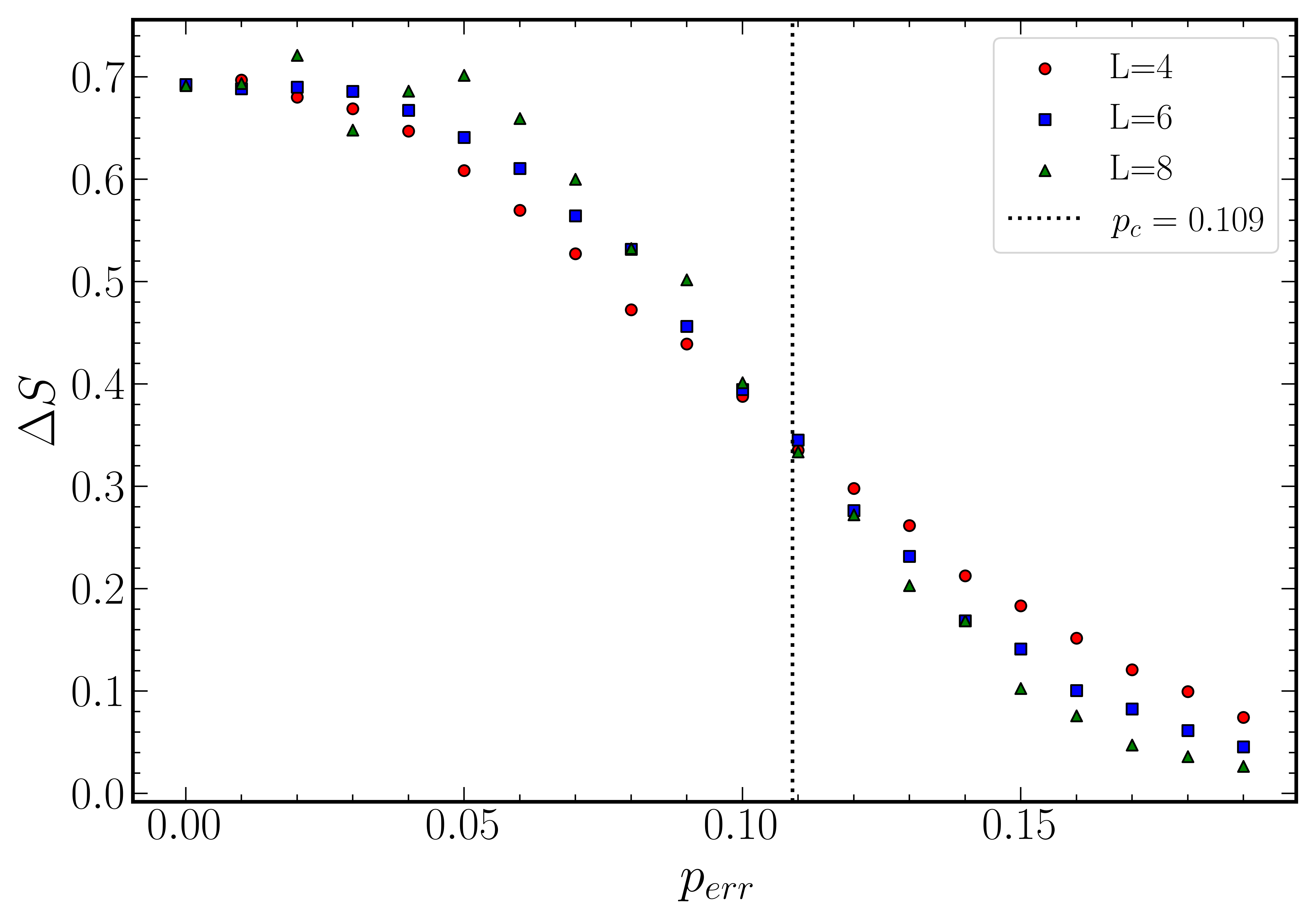}
    \caption{}
    \label{fig:loopModel2}
\end{subfigure}
\caption{\textbf{Learning a noisy loop ensemble}. (top) Error in CNN learning of the classical loop ensemble subject to bit-flip error rate $p_{err}$.  We expect that with larger system size and CNN depth, the location of the peak drifts to the critical point $p_c$ at which CMI has power-law decay. The results are from a residually connected CNN architecture with 6, 8, and 10 layers for $L = 4, 6, 8$ respectively.  (bottom) By subtracting off the KL divergence of learning the loop ensemble with no logical information from that of learning the ensemble with fixed logical, we observe that the additional difficulty of learning a fixed logical sector approaches $\log2$ in the nontrivial (error-correctable) phase. This subfigure uses a vanilla 3-layer CNN.}
\label{fig:loopnoise}
\end{figure}

\emph{Surface code under decoherence.--}
We now analyze the hardness of neural network learning of the distribution arising from the surface code \cite{Fowler_2012} under the same bit-flip error $\mathcal{N}^{[i]}_{p_{\text{err}}}$ on every qubit. 
The surface code is adapted from the toric code and defined with open boundary conditions (syndrome checks are depicted in Fig. 2 (right)).  Again, given the error model, the X checks are irrelevant. The target distribution at $p_{\text{err}}=0$ is a classical ensemble $\rho_{\mathrm{loop}}$, which is the zero-temperature Gibbs state of the Hamiltonian
$
    H_{\text{loop}} = -\sum_{v} B_{v}. 
$
$\rho_{\mathrm{loop}}$ is the uniform mixture over all configurations consisting of closed loops and strings ending on the open boundaries.
The surface code has one logical operator $Z_L$, the product of $Z$ operators that connect the top and bottom boundaries.


The mixed-state $\sigma_p = \mathcal{N}_p(\rho_{\mathrm{loop}})$ undergoes a phase transition to a trivial state at $p_{\mathrm{err}}=p_c\approx 0.11$ (the same threshold as the previous syndrome distribution, which is generated from toric code bit-flip errors).  At this transition, the CMI for a concentric partition decays as a power law with the buffer size \cite{markov,sang2025mixedstatephaseslocalreversibility}, as opposed to exponential decay in both phases.  We expect ensembles with quasi-long-ranged CMI are harder to learn than those with exponential decay of CMI due to the following intuition. 
Near the critical point, $\log p(x)$ has more contribution from high weight operators, as opposed to weight-$O(\log n)$ operators in the two phases. Therefore, learning the distribution requires query functions with longer range, which is harder given fixed hyperparameters during training.
Indeed, when learning these distributions with a CNN, we see a peak in the learning error which sharpens and drifts toward the thermodynamic threshold as the system size and CNN depth are increased (Fig. 4a); note that there are large finite size effects as one is effectively contrasting exponentially and algebraically decaying CMI for a small system.  

One can also fix the logical $Z_L =l = \pm 1$, giving the state $\rho_{l} = \rho_{\mathrm{loop}}(1+l Z_L)/2$.  With noise, $\rho_p = \mathcal{N}_p(\rho_{l})$ undergoes a decodability transition at $p_c$.  
If $p<p_c$, the logical information is decodable and $\sigma_p$ (without logical information) and $\rho_p$ (with logical information) are LI \cite{sang2025mixedstatephaseslocalreversibility}. Our result then indicates that learning $\sigma_p$ and $\rho_p$ as the target state produces the same state $\rho_{\theta}$ under training. As we show in SM, this implies that the difference in hardness between learning logical state and no-logical state  
\begin{equation}
    \Delta S = D_{KL}(\rho_p||\rho_{\theta}) - D_{KL}(\sigma_p||\rho_{\theta}),
\end{equation}
is exactly the amount of decodable logical information, which undergoes a phase transition from $\log 2$ to $0$ at $p=p_c$.  This is confirmed by using a CNN to learn these distributions (Fig. 4b).


We can generalize the example above to any Calderbank–Shor–Steane (CSS) code characterized by $X$-type and $Z$-type stabilizer checks. Given any codeword state $|\psi\rangle = \sum_{\vec{z}} c(\vec{z})|\vec{z}\rangle$ that is an eigenstate of a $Z$-type logical operator, it is exponentially hard in the weight of the logical operator to learn the distribution $p(\vec{z})=|c(\vec{z})|^2$. For a good code where code distance scales polynomially with the system size, the codeword state is super-polynomially hard to learn under local SQ. Furthermore, under bit flip noise, such hardness disappears if the noise strength exceeds the decoding threshold, where different logical sectors become indistinguishable. This suggests that learnability can be used as a diagnostic of decoding threshold transition for CSS codes such as good low-density parity check (LDPC) codes.


\emph{Discussions.--}
In this work, we show that nontrivial mixed-state phases of matter, characterized by LI and in some cases long-range CMI, are hard to learn under local SQ and neural network training.  Our predictions are validated in various architectures, such as RNN, CNN and transformer under random initializations.  Thus, hardness of learning serves as a novel probe of mixed-state phases and transitions as well as error-correction thresholds.  

However, there are fine-tuned ways to break the local SQ ansatz and thus make the hard-to-learn states easy to learn. One way is to initialize the neural network in the nontrivial phase; for example, we show in the SM that initializing with the parity distribution makes it easy to learn the SWSSB phase.  The importance of initialization for learnability is also extensively discussed in the ML literature \cite{abbe2025learninghighdegreeparitiescrucial,sutskever13,glorot10a,frankle2019lotterytickethypothesisfinding,he2015delvingdeeprectifierssurpassing}.

Our work focuses on hardness of learning classical distributions. For quantum states, the formalism can be applied to measurement outcomes in a given basis. For example, the measurement outcome distribution of a 1D symmetry protected topological state in the symmetric basis is hard to learn, because the distribution is in the SWSSB phase. One future direction is to extend the formalism  to learn quantum states directly, possibly making use of quantum generalizations of SQ learning.  Our work also explains the hardness of using neural networks to learn distributions at measurement-induced phase transitions \cite{hou2025machinelearningeffectsquantum} and predicts hardness of learning the output distribution of random quantum circuits above a critical depth due to its long-range CMI \cite{PhysRevX.12.021021}.  

Our work suggests fundamental connections between mixed-state phases, error correction, and machine learning that deserve further analysis.  For example, it suggests that the same techniques in error correction for the purpose of hiding data from an environment can also be applied to hide data from machine learning.  Furthermore, it suggests the possibility that an $l$-local Gibbs state may serve as an effective ansatz for the ``simplicity bias'' observed in machine learning.  An interesting avenue to explore is whether and how this simplicity bias manifests in other unsupervised learning schemes, such as diffusion models and normalizing flows. Current research indicates that CMI and phase definitions are pertinent when considering explicitly local flow-based models \cite{hu2025localdiffusionmodelsphases}. However, it is an open question whether simplicity bias renders these concepts relevant in a more general setting. Finally, it would be interesting to apply the CMI diagnostic to real data beyond toy models, e.g., from a noisy quantum simulation, and see how well it quantifies unlearnability.   

{\it Note added:} During the completion of this work, we noticed the recent preprint \cite{cagnetta2026derivingneuralscalinglaws}, which relates the decay of CMI in natural language to neural network scaling laws.

\begin{acknowledgments}
We acknowledge helpful discussions with Samuel Garratt, Tarun Grover, Chong Wang, Yi-Hong Teoh, Schuyler Moss, and Sehmimul Hoque.
This research was supported by Natural Sciences and Engineering Research Council of Canada (NSERC) and an Ontario Early Researcher Award.  Research at Perimeter Institute is supported in part by the Government of Canada through the Department of Innovation, Science and Industry Canada and by the Province of Ontario through the Ministry of Colleges and Universities.
\end{acknowledgments}


\newpage
\onecolumngrid

\appendix

\section{Details of Numerical Experiments}

For the numerical results presented here, we use autoregressive neural network architectures, interpreting the step by step outputs of the model as the conditional log probabilities. To optimize this network, we adopt negative log likelihood as the loss function, taking advantage of the following property of the gradients,
\begin{equation}
    \nabla_\theta D_{KL}(p || p_\theta) = -\nabla_\theta \mathbb{E}_{p} [\log p_\theta].
\end{equation}
The expectation value with respect to $p$ can be approximated through sampling.

\subsection{Network Architectures}

We use the following three architectures in the numerical experiments detailed in this work:

\textbf{LSTM Architecture.} ~\cite{lstm} A three-layer autoregressive recurrent neural network based on Long Short-Term Memory (LSTM) cells. The network processes spin configurations sequentially, with the first two layers using standard LSTM cells that maintain hidden and cell memory states. The final layer consists of a dense layer with ReLU activation that projects to the local Hilbert space dimension, followed by a softmax normalization to produce conditional probabilities for autoregressive sampling. This hybrid design is similar to the architecture in Ref.~\cite{Mohamed_Rnns}.

\textbf{Vanilla Convolutional Architecture.}~\cite{pixelCNN} An autoregressive convolutional neural network using masked convolutions to preserve causality. The network consists of a stack of masked 2D convolutional layers with ReLU activations applied after each layer. Each intermediate layer uses a non-exclusive mask that allows the current site to depend on itself and previous sites, while the final output layer employs an exclusive mask to enforce strict autoregressive ordering where predictions depend only on previously generated sites.

\textbf{Residual Convolutional Architecture.} An autoregressive convolutional neural network employing residual blocks with masked convolutions. The architecture begins with an initial masked convolution to project inputs to the feature dimension, followed by a stack of residual blocks. Each residual block contains two masked convolutional layers with a skip connection, where ReLU activations are applied after each convolution and after the residual addition. Dropout is applied after the first convolution within each residual block. Dropout rate is set to 0.2 for all results in fig.~\ref{fig:toric_syndrome} and for the $L = 4$ data in fig.~\ref{fig:loopnoise} ($L  = 6, 8$ use a dropout of 0.4). The final layer uses an exclusive mask to enforce strict autoregressive generation, ensuring predictions at each site depend only on previously generated sites.

\textbf{Transformer Architecture.}~\cite{Transformer} An autoregressive decoder-only Transformer architecture adapted for many-body physics. The model employs sinusoidal positional encodings to capture spatial information, followed by multiple decoder layers. Each decoder layer contains multi-head self-attention with causal masking to maintain autoregressive properties, followed by a position-wise feedforward network. Residual connections and layer normalization are applied after both the attention and feedforward sublayers. The architecture processes configurations in parallel through the attention mechanism while respecting causal dependencies via a triangular attention mask.

\subsection{Training Data Generation}

With the exception of the 1D Transverse-Field Ising model, we generate a dataset of 100000 samples using Monte-Carlo. For each of those samples, we also calculate the true probability using tensor network methods. For 1D TFIM, we generate samples and calculate probabilities using a matrix-product state (MPS) optimized with density matrix renormalization group (DMRG). For the surface code, we modify the Monte-Carlo samples further. The surface code measurements correspond to $\{1, -1\}$ degrees of freedom that live on the edges of a lattice. We group the two spins on the top and right to each vertex into one degree of freedom of local dimension 4 ($\{-3, -1, 1, 3\}$). For vertices at boundaries that only have one associated spin, we add an artificial $+1$ spin.

\subsection{Optimization Details}

We note that the exact probabilities are not used during the optimization process, but rather only for KL-divergence calculations as a metric of the trained neural network's performance. We use a batch-size of 128 and the Adabelief optimizer with an initial learning rate of $5 \times 10^{-5}$. For each parameter point (system size, noise rate, transverse field), we train up to 30 copies of a neural network with different random initialization. Training runs for 400000 steps. 

\subsection{Evaluation Metrics}

The KL-divergence evaluations use 100000 separate samples, distinct from the training set. We estimate it as $D_{KL}(p || p_\theta) \approx \sum_{x \sim p} \log(\frac{p(x)}{p_\theta(x)})$. The parity metric evaluation involves sampling 10000 configurations from the trained neural network ($x \sim p_\theta$). For the figures in the main text using CNN architectures, we average over the results from different random initializations. However, for the results in App.~\ref{app:B}, we choose the best performing network (the one with the minimal KL). 

\subsection{Implementation Details}

All numerical experiments are implemented using JAX for automatic differentiation and just-in-time compilation, with the NetKet library for quantum many-body physics functionality. The Hilbert space is constructed as a 1D hypercube graph with periodic boundary conditions, even for 2D systems which are serialized into a 1D chain for the autoregressive modeling. For the parameter sweeps, we use noise rates $p \in [0.0, 0.2)$ with a step size of 0.01.

\section{Additional Numerical Results} \label{app:B}
\subsection{Syndrome distribution of repetition code}
\begin{figure}[htbp]
    \includegraphics[width=0.48\textwidth]{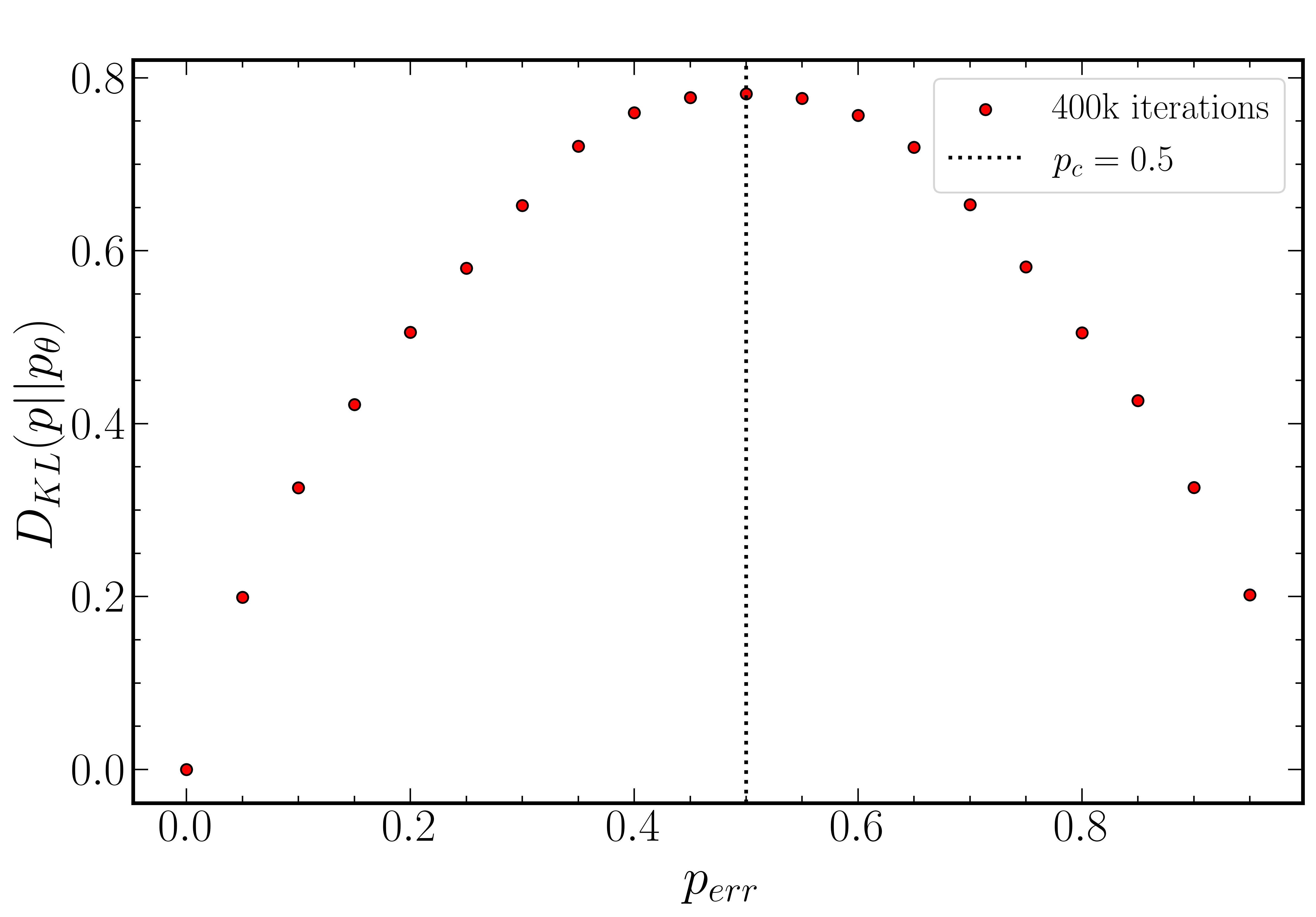}
    \label{fig:GhzKl}
\hspace{2mm}
\caption{\textbf{Performance metrics for 1d syndrome distribution}. KL divergence between trained RNN and exact probabilities of 64 site 1D repetition code syndromes with MPDOs.}
\end{figure}
For the repetition code with error rate $p$ at the threshold $p_c = 1/2$, the syndrome distribution is exactly the parity distribution. This is locally indistinguishable from the maximally mixed state, thus it is hard to learn. On the other hand if $p\neq p_c$, then the Markov length is finite. Thus we expect that only the threshold is hard to learn. Here we have an efficient RNN representation of the distribution due to the existence of the MPDO representation and the MPDO to RNN mapping \cite{Wu_2023_MPSRNN}.

\subsection{Ferromagnetic ground states} 
\begin{figure*}[t]
\begin{subfigure}{0.48 \textwidth}
    \includegraphics[width=\linewidth]{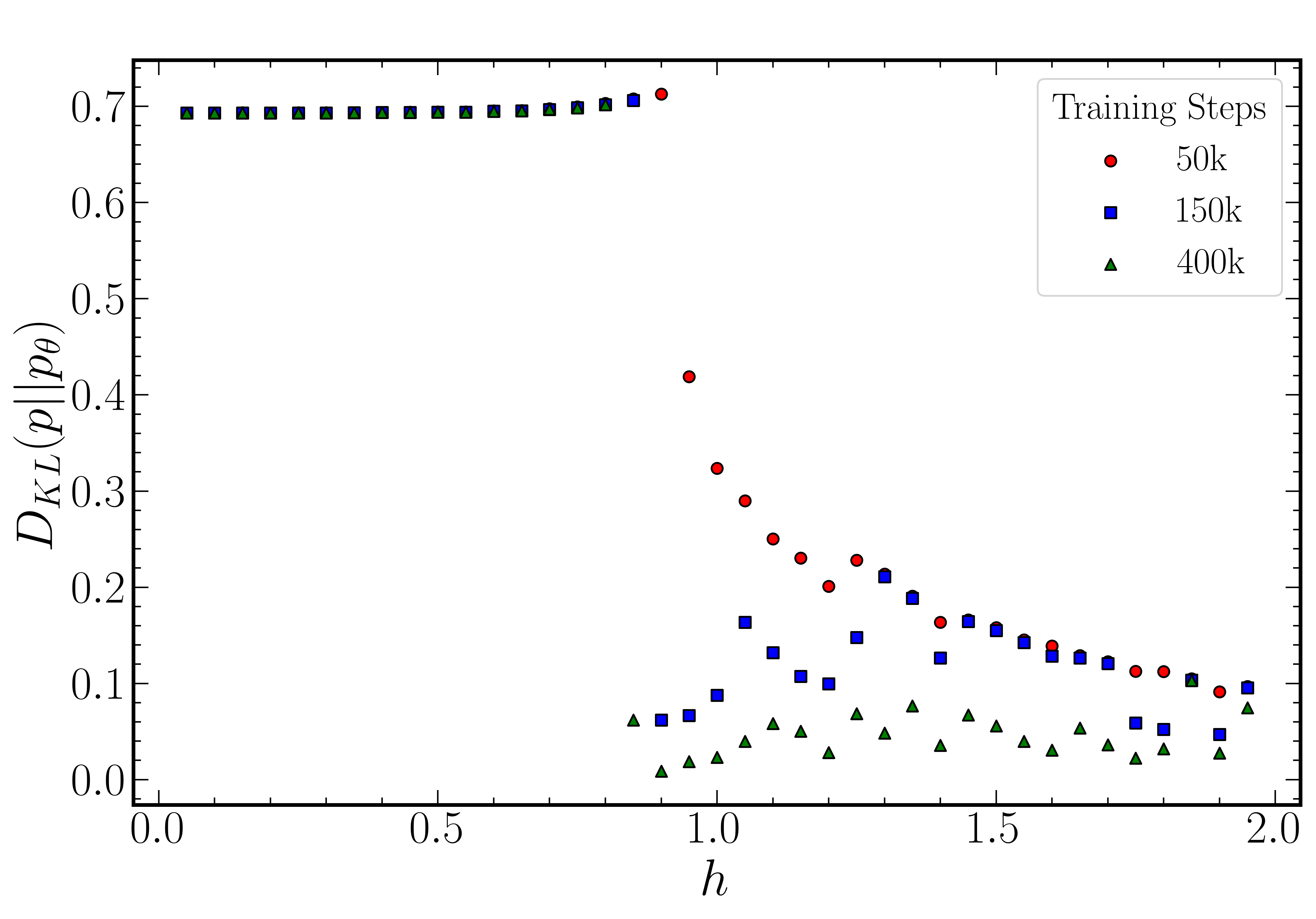}
    \label{fig:IsingKl}
\end{subfigure}
\hspace{2mm}
\begin{subfigure}{0.48 \textwidth}
    \includegraphics[width=\linewidth]{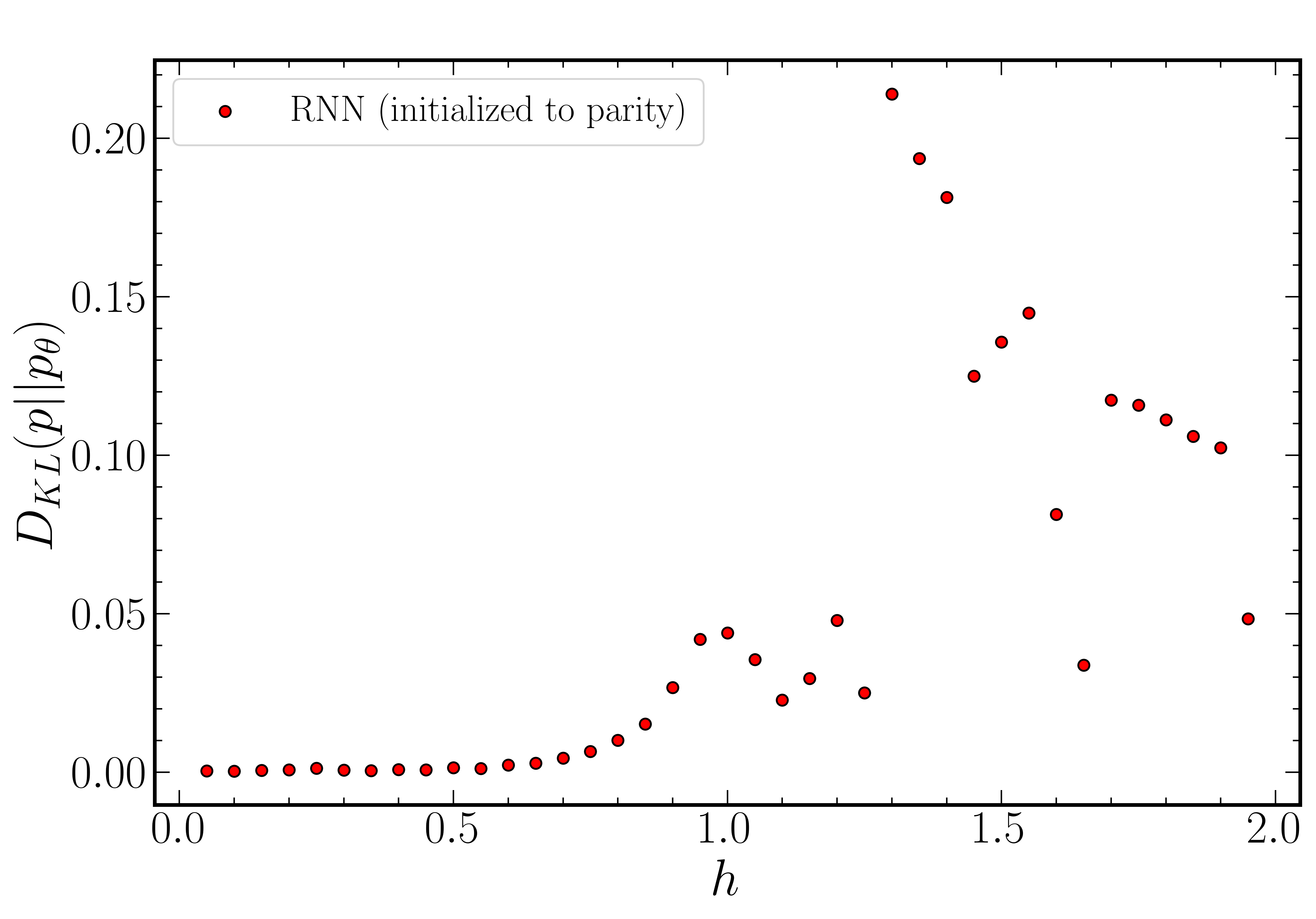}
    \label{fig:IsingMPDOKl}
\end{subfigure}
\caption{\textbf{Performance metrics for the Ising model.} (left) KL divergence between MPS and trained RNN for a 32 site dephased TFIM ground state. KL value remains $\sim \log2$ in the ferromagnetic phase as the RNN misses the one global parity constraint. (right) KL divergence between MPS and trained RNN initialized to parity distribution. As the scale of the y-axis indicates, the in-phase initialization overcomes the hardness of learning.}
\end{figure*}
We also consider the ferromagnetic Ising model:
\begin{equation}  H = - \sum_{\langle i,j \rangle} \sigma_i^x \sigma_j^x - h \sum_i \sigma_i^z.  \end{equation} 
In the ordered phase ($h<1$), the two ground states are SWSSB partners that are locally indistinguishable. Thus, learning is hard. Once again, we expect an efficient RNN representation, because there exists an efficient MPS representation.

\begin{figure*}[htbp]
\begin{subfigure}{0.48 \textwidth}
    \includegraphics[width=\linewidth]{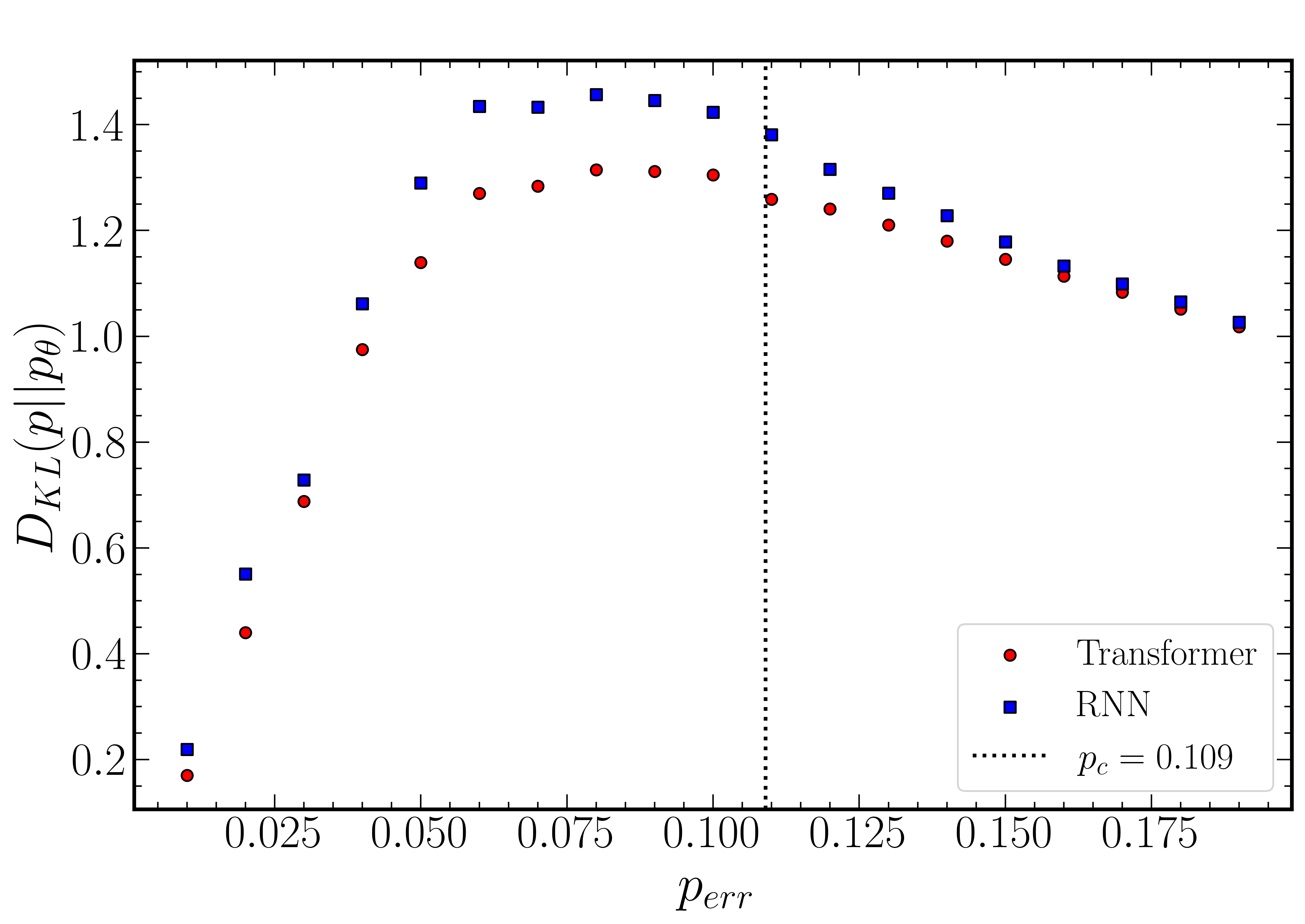}
    \label{fig:SyndromeKlTransformerRNN}
\end{subfigure}
\hspace{2mm}
\begin{subfigure}{0.48 \textwidth}
    \includegraphics[width=\linewidth]{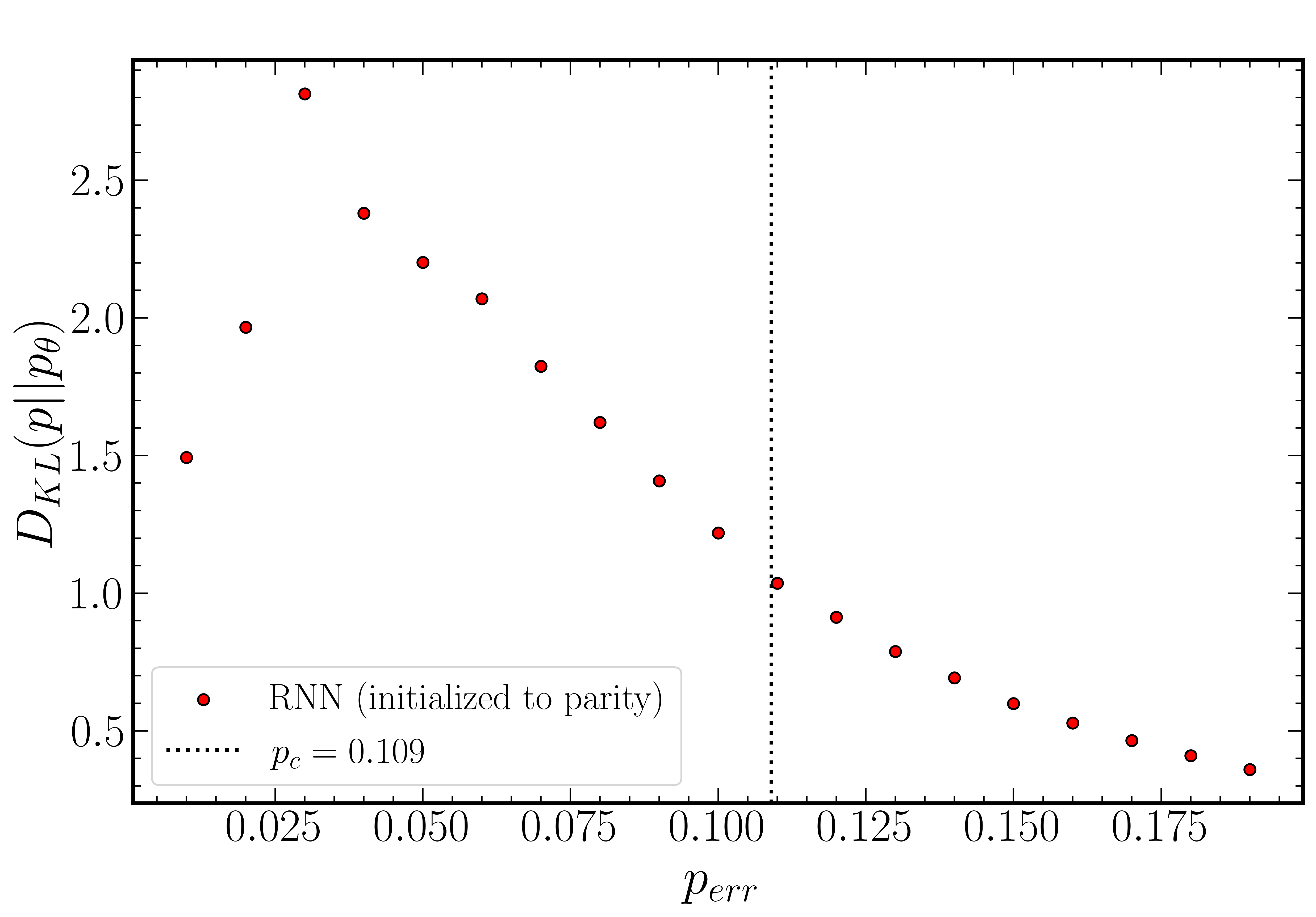}
    \label{fig:SyndromeKLMPDO}
\end{subfigure}
\caption{\textbf{Performance metrics for the 2D Toric code syndromes.} (left) KL divergence between TN and a trained neural network for vertex syndromes of a 7x7 toric code. (right) KL divergence between TN and a trained RNN initialized to parity distribution.}
\end{figure*}
\section{Unlearnability of locally indistinguishable states}
\label{app:C}
Our first main result is that LI states cannot be efficiently learned in the local SQ model. Two states $\rho_1$ and $\rho_2$ are LI if
\begin{equation}
    |\tr((\rho_1 -\rho_2)\phi)| \leq \delta
\end{equation}
for all operators $\phi$ supported on $l_{\phi}$ qubits and $\delta \leq O(e^{-n^{1-c_0}/l_{\phi}})$ for a small constant $c_0>0$. We will provide two proofs of hardness under local SQ learning. The first one is information theoretic, proving that the SQ oracle outputs have exponentially small mutual information with the parameter of the target state. The second one is specific to gradient descent training of neural networks, proving that the training trajectories of $\rho_1$ and $\rho_2$ are exponentially close. While the information-theoretic proof also implies the hardness under the noisy gradient descent setting, the second proof has its own implications on practical neural network training. The result of this section holds for both notions of locality: if locality means $s$-local, the oracles used here are $s$-LSQ; if locality means low weight Fourier-Walsh modes, the oracles used here are $l$-LSQ.

\subsection{Information-theoretic hardness}
To formalize the hardness, we construct a family of states $\rho(p_1) = p_1\rho_1 + (1-p_1) \rho_2$ for $p_1 \in [0,1]$, where $p_1$ is drawn from a distribution $\mathcal{D}$ with constant variance (e.g., $\mathsf{Unif}[0,1]$). Let $\mathcal{A}$ be an $l$-local SQ algorithm making $Q$  queries $\phi_1, \dots, \phi_Q$, each supported on $\leq l$ qubits, receiving answers $\mathbf{v} = (v_1, \dots, v_Q)$. The query function can be adaptive, i.e., $\phi_j$ may depend on the previous query results $v_1,\cdots, v_{j-1}.$ For each query $j$, define $\delta_j = \tr((\rho_1 - \rho_2) \phi_j)$; since $\phi_j$ is $l$-local ($l_{\phi_j} \leq l$), the local indistinguishability implies $|\delta_j|\leq \delta \leq O(e^{-n^{1-c_0}/l})$.

The true expectation for the $j$-th query under $\rho(p_1)$ is
\begin{equation}
    \mu_j(p_1) = \tr(\rho_2 \phi_j) + p_1 \delta_j,
\end{equation}
so the oracle response is $v_j \sim \mathsf{Unif}[\mu_j(p_1) - \tau, \mu_j(p_1) + \tau]$.

The key quantity is the mutual information $I(\mathbf{v}, p_1)$, which upper-bounds the amount of information about $p_1$ (and thus about the underlying state) revealed by the oracle responses $\mathbf{v}$. Adaptivity is handled by
\begin{equation}
    I(\mathbf{v}, p_1) = \sum_{j=1}^Q I(v_j, p_1 | \mathbf{v}_{<j}),
\end{equation}
where $\mathbf{v}_{<j} = (v_1, \dots, v_{j-1})$. 
Conditioned on $\mathbf{v}_{<j}$, the conditional mutual information $I(v_j, p_1|\mathbf{v}_{<j})$ equals $I(p_1,v_j)$, where
\begin{equation}
    v_j = \mu_j(p_1) + Z_j, \quad Z_j \sim \mathsf{Unif}[-\tau, \tau]
\end{equation}
and $Z_j$ is independent of $p_1$. Let $p_1$ be a uniformly random number between $[0,1]$. The mutual information between $p_1$ and $v_j$ is the mutual information between $p_1$ $p_1\delta_j + Z_j$,
\begin{equation}
    I(p_1,v_j| \mathbf{v}_{<j}) = I(p_1,p_1\delta_j + Z_j)  = \frac{\delta_j}{4\tau}
\end{equation}
if $\delta_j < 2\tau$. Summing over the $Q$ queries thus yields
\begin{equation}
    I(\mathbf{v}, p_1) \leq  \frac{Q\delta}{4\tau}.
\end{equation}
For $\delta \leq O(e^{-n^{1-c_0}/l})$ and $\tau^{-1} = \mathrm{poly}(n)$, the mutual information is negligible. Thus, we cannot efficiently learn the value of $p_1$ with local SQ oracles. 
\subsection{Hardness of noisy gradient descent training}
We consider training the variational ansatz $\rho_{\theta}$ using noisy gradient descent with update rule
\begin{equation}
\label{eq:Markov_update}
    \theta_i(t+1) = \theta_i(t) - \gamma\nabla_{\theta_i} S(\rho||\rho_{\theta(t)}) + Z_i(t),
\end{equation}
where $Z_i(t)\sim\mathcal{N}(0,\sigma^2)$ are independent Gaussian noise. The noise is assumed to be polynomially small $\sigma^{-1} = \mathrm{poly}(n)$. Under the local SQ assumption, 
the gradient is the expectation value of local operators $G_i(\theta)$ on $\rho$,
\begin{equation}
    \nabla_{\theta_i} S(\rho||\rho_{\theta(t)}) = \mathrm{tr}\,\rho\, G_i(\theta(t)).
\end{equation}
If the target $\rho$ has a locally indistinguishable partner $\rho_{\mathrm{ref}}$, then
\begin{equation}
\label{eq:LI_LSQDIFF}
    \big|\nabla_{\theta_i} S(\rho||\rho_{\theta(t)}) - \nabla_{\theta_i} S(\rho_{\mathrm{ref}}||\rho_{\theta(t)})\big|
    = \mathrm{tr}((\rho-\rho_{\mathrm{ref}})G_i(\theta(t))) < \delta,
\end{equation}
with $\delta \le O(e^{-n^{1-c_0}/l_G})$. 

The presence of noise renders the updates stochastic, so we regard $\theta_i(t)$ as random variables. To compare the two trainings (one with target state $\rho$ and one with target state $\rho_{\mathrm{ref}}$) we examine the KL divergence between the full trajectory distributions. Let $\mathbb{P}(\{\theta(t), t = 0,1,2\,\cdots T\})$ denote distribution over trajectories $\{\theta(0),\ldots,\theta(T)\}$ with target state $\rho$, and $\mathbb{Q}$ the analogous one driven by target state $\rho_{\mathrm{ref}}$, with identical initialization at $t=0$. We will show that the trace distance between $\mathbb{P}$ and $\mathbb{Q}$ is small under local indistinguishability.

Since the dynamics Eq.~\eqref{eq:Markov_update} are Markovian, we apply the chain rule for KL divergence; writing conditionals at the same current parameter value $\theta$ (i.e. comparing the one-step transition kernels  evaluated at the identical conditioning argument $\theta(t)$) gives
\begin{equation}
\label{eq:trajKL_samecond}
    D_{KL}(\mathbb{P}\|\mathbb{Q})
    = \mathbb{E}_{\{\theta(t)\}\sim\mathbb{P}} \left[\sum_{t=0}^{T-1}
            D_{KL}\big(
                \mathbb{P}(\theta(t+1)\mid\theta(t))
                \,\big\|\,
                \mathbb{Q}(\theta(t+1)\mid\theta(t))
        \big)\right].
\end{equation}

For a fixed conditioning value $\theta$, the two one-step transition kernels are Gaussians with identical covariance $\sigma^2 I$ and means
\[
\mu_{\mathbb{P}}(\theta)=\theta-\gamma\nabla_{\theta}S(\rho\|\rho_{\theta}),\qquad
\mu_{\mathbb{Q}}(\theta)=\theta-\gamma\nabla_{\theta}S(\rho_{\mathrm{ref}}\|\rho_{\theta}).
\]
The KL between these two Gaussians reduces to the squared mean difference:
\begin{equation}
    D_{KL}\big(\mathbb{P}(\theta(t+1)\mid\theta)\,\big\|\,\mathbb{Q}(\theta(t+1)\mid\theta)\big)
    = \frac{\gamma^2\|\nabla_{\theta}S(\rho\|\rho_{\theta})-\nabla_{\theta}S(\rho_{\mathrm{ref}}\|\rho_{\theta})\|^2}{2\sigma^2}.
\end{equation}
Plugging this into Eq.~\eqref{eq:trajKL_samecond} and using the local-indistinguishability bound on gradient differences
\begin{equation}
\label{eq:KLsum_delta}
    D_{KL}(\mathbb{P}\|\mathbb{Q})
    = \sum_{t=0}^{T-1}\mathbb{E}_{\{\theta(t)\}\sim\mathbb{P}}\!\left[
        \frac{\gamma^2\|\nabla_{\theta}S(\rho\|\rho_{\theta(t)})-\nabla_{\theta}S(\rho_{\mathrm{ref}}\|\rho_{\theta(t)})\|^2}{2\sigma^2}
    \right]
    \le \frac{T\gamma^2\delta^2}{2\sigma^2}.
\end{equation}

The KL divergence of any marginal law at time $t$ is upper bounded by the KL of the full trajectories due to data processing inequality:
\begin{equation}
    D_{KL}\big(\mathbb{P}_{\theta(t)}\,\|\,\mathbb{Q}_{\theta(t)}\big)
    \le D_{KL}(\mathbb{P}\|\mathbb{Q})
    \le \frac{T\gamma^2\delta^2}{2\sigma^2}.
\end{equation}
Applying Pinsker's inequality yields the total variation bound on the marginals:
\begin{equation}
    \frac{1}{2}||\mathbb{P}_{\theta(t)} -\mathbb{Q}_{\theta(t)}||_1
    \le \sqrt{\tfrac{1}{2}D_{KL}(\mathbb{P}\|\mathbb{Q})}
    \le \frac{\gamma\delta\sqrt{T}}{2\sigma}.
\end{equation}
Since $\delta\le O(e^{-n^{1-c_0}/l_G})$, the total variation distance between the parameter distributions for any polynomially bounded $T,\gamma$ and $\sigma^{-1}$ remains exponentially small.

Therefore, we have shown that samples from two locally indistinguishable distributions give rise to exponentially close training trajectories under noisy gradient state training with local SQ oracles. 

\section{Stability of local indistinguishability}
Suppose $\sigma_1$ and $\sigma_2$ are LI, and  $||\sigma_1 - \mathcal{D}(\rho_1)||\leq \epsilon$ and $||\sigma_2 - \mathcal{D}(\rho_2)||\leq \epsilon$. We can establish that $\rho_1$ and $\rho_2$ are globally distinguishable. This comes from $||\sigma_1 -\sigma_2||_1 \geq C'$ and the one way channel connection, . We have
\begin{equation}
    \begin{aligned}
        C' &\leq||\sigma_1 -\sigma_2 ||_1 \\&\leq ||\sigma_1 - \mathcal{D}(\rho_1)||_1 + ||\mathcal{D}(\rho_1)- \mathcal{D}(\rho_2)||_1 + ||\sigma_2 - \mathcal{D}(\rho_2)|| \\
        & \leq 2\epsilon + ||\mathcal{D}(\rho_1)- \mathcal{D}(\rho_2)||_1 \\
        & \leq 2\epsilon + ||\rho_1 -\rho_2||_1
    \end{aligned}
\end{equation}
by triangle inequality and data processing inequality, thus $||\rho_1-\rho_2||\geq C' -2\epsilon$. 

The reverse direction $||\rho_1 - \mathcal{N}(\sigma_1)||\leq \epsilon$ and $||\rho_2 - \mathcal{N}(\sigma_2)||\leq \epsilon$, gives LI of $\rho_1$ and $\rho_2$. Let the channel $\mathcal{N}$ be depth $d$, i.e., it can be dilated into a unitary circuit $U_{SA}$ of depth $d$ such that $\mathcal{N}(\rho_S) = \tr_A( U_{SA}(\rho_S \otimes |0\rangle\langle 0|_A)U^{\dagger}_{SA}) $. Then for any $l_{\phi}$-local operator $\phi_S$,
\begin{equation}
\begin{aligned}
    \tr_S[(\rho_1-\rho_2)_S\phi_S] &= \tr_{SA}[U((\sigma_1-\sigma_2)_S\otimes |0\rangle \langle 0|_A)U^{\dagger})\phi_S] \\
    &= \tr_S [(\sigma_1 -\sigma_2)_S \mathcal{N}^{*}(\phi_S)]
\end{aligned}
\end{equation}
where 
\begin{equation}
    \mathcal{N}^{*}(\phi_S) = \langle 0_A| U^\dagger (\phi_S \otimes I_A) U |0_A\rangle
\end{equation}
The support of $\mathcal{N}^{*}(\phi_S)$ is at most $2d\times l_{\phi}$, when the support of $\phi_S$ is spatially disjoint. Also note that the spectral norm $||\mathcal{N}^{*}(\phi_S)|| \leq || U^\dagger (\phi_S \otimes I_A) U || = ||\phi_S|| $ does not blow up. Thus,
\begin{equation}
    |\tr_S[(\rho_1-\rho_2)_S\phi_S]|  \leq O(e^{-n^{1-c_0}/(2dl_{\phi})})
\end{equation}
If $d = \text{polylog} (n)$, we still have
\begin{equation}
    |\tr_S[(\rho_1-\rho_2)_S\phi_S]|  \leq O(e^{-n^{1-c'_0}/l_{\phi}})
\end{equation}
for any $c'_0>c_0$. Therefore, $\rho_1$ and $\rho_2$ are locally indistinguishable.
\section{Local SQ learning of finite Markov length states}
\label{sec:AppE}
In this section prove the easiness of learning classical distributions with finite Markov length. Since we will deal with multiple partitions, we distinguish two notions of locality: $l$-local SQ ($l$-LSQ) means that the query function has at most weight-$l$ terms in the Fourier-Walsh expansion; $l$-spatially-local SQ ($l$-SLSQ) means that the query function has support on at most $l$ sites on a geometrically local patch. 
\subsection{Distributions on 1D lattice}
In this subsection we prove that a probability distribution on a 1D open chain with finite Markov length $\xi$ can be efficiently learned with SLSQ oracles. Then we generalize the result to periodic boundary conditions. For the open chain, we have the theorem formalized below.
\begin{figure}
    \centering
    \includegraphics[width=0.9\linewidth]{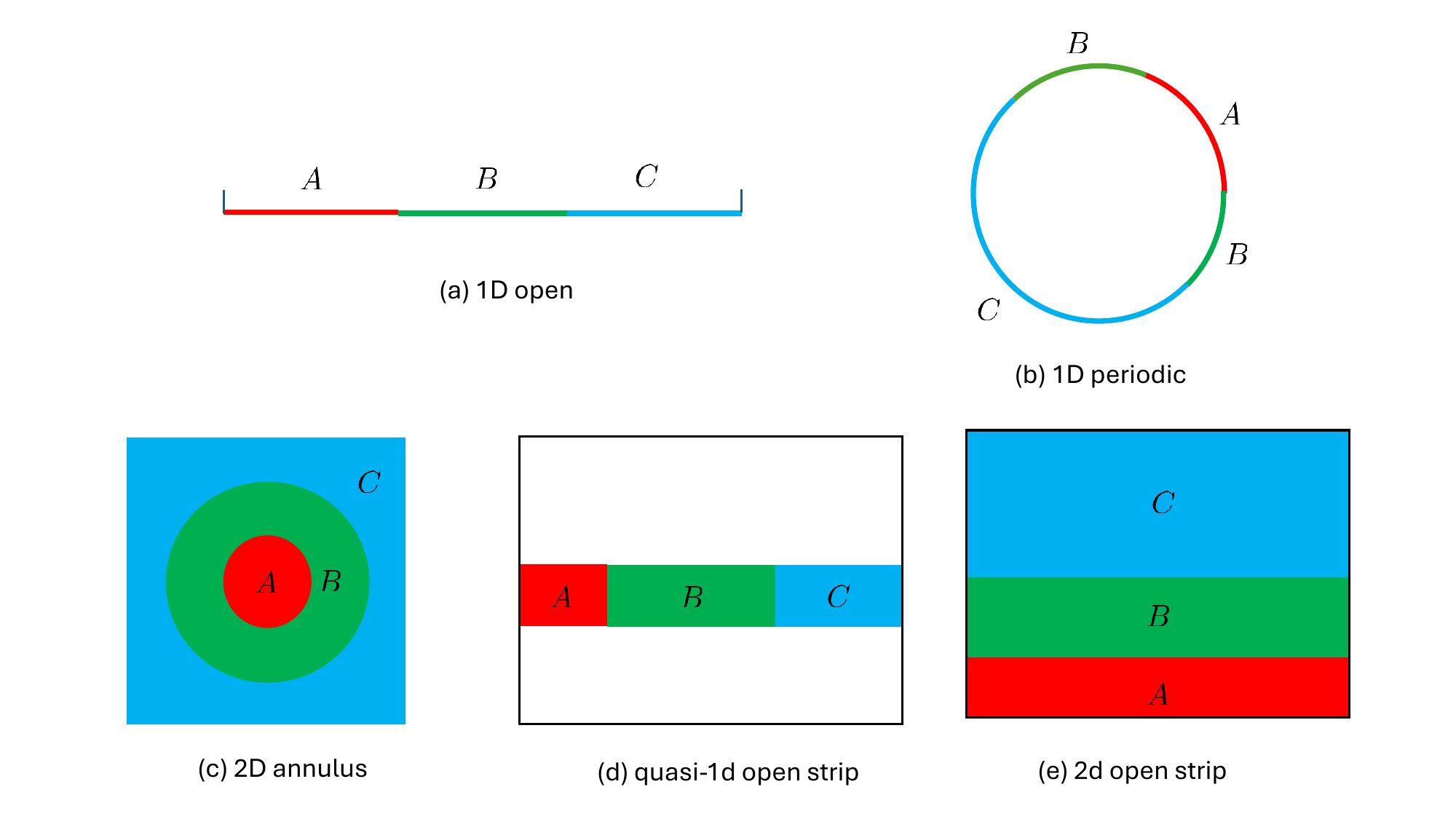}
    \caption{Partitions of CMI $I(A:C|B)$ for 1D and 2D systems. For (b), we assume for simplicity that the two intervals for $B$ are of equal length. For (c), we assume for simplicity that $A$ and $AB$ are concentric balls.}
    \label{fig:partitions}
\end{figure}
\begin{theorem}
\label{thm:appSQCMI}
Let $p$ be a distribution of $n$ spins on 1D with finite Markov length $\xi$, i.e., $I(A:C|B)\leq \mathrm{poly}(|A|,|C|) e^{-|B|/\xi}$ for any contiguous intervals $A,B,C$ (see the partition (a) in Fig.~\ref{fig:partitions}). Let $x>2$ be a positive constant, then there exists a $O(\log n)$-spatially-local SQ algorithm using $M_{tot} \leq O(n^{2x\xi + 1})$ SLSQ oracles with tolerance $\tau = O(n^{-x(2\xi + 1/2) +c_0})$ such that it produces a probability distribution $\tilde{p}$ with error $||p-\tilde{p}||_1 \leq O(n^{-x/2 + 1 + c_0})$, with $c_0$ being an arbitrarily small positive constant.
\end{theorem}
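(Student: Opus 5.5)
Partition the chain into consecutive intervals $A_1,\dots,A_m$ of length $l = x\xi\log n$, so that $m = n/l$. Each $A_iA_{i+1}$ then contains $2l$ spins, and its marginal lives on $2^{2l} = n^{2x\xi}$ configurations. The plan has three steps: estimate the marginals $p_{A_iA_{i+1}}$ by tomography with spatially local indicator queries, assemble them via the Markov reconstruction Eq.~\eqref{eq:MarkovRecon}, and bound the total error. That error has two sources: the truncation of the CMI, controlled by Theorem B.1 of Ref.~\cite{yang_cmi_nns}, and the propagation of tomography errors through the product of conditionals.

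\emph{Tomography.} For each $i$ and each configuration $y\in\{0,1\}^{2l}$, query $\phi(x)=\mathbb{1}[x_{A_iA_{i+1}}=y]$. This function takes values in $[0,1]$ and is supported on a contiguous patch of $2l=O(\log n)$ sites, so it is an SLSQ query. The number of queries is $m\cdot n^{2x\xi}\le O(n^{2x\xi+1})$, which matches $M_{tot}$. Each answer $\tilde p_{A_iA_{i+1}}(y)$ has error at most $\tau$. I would clip negative values, renormalize, and enforce consistency of the overlapping marginals on $A_{i}$ by defining $\tilde p_{A_{i+1}|A_i}=\tilde p_{A_iA_{i+1}}/\tilde p_{A_i}$ with $\tilde p_{A_i}$ taken from the same pair. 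In $\ell_1$ norm each estimated marginal then satisfies $\|\tilde p_{A_iA_{i+1}}-p_{A_iA_{i+1}}\|_1\le 2\tau n^{2x\xi}= O(n^{-x/2+c_0})$, using the stated $\tau=n^{-x(2\xi+1/2)+c_0}$.

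\emph{Exact Markov approximation.} Let $p^{M}=p_{A_1A_2}\prod_i p_{A_{i+1}|A_i}$ be the reconstruction from the exact marginals. By the chain rule for relative entropy, $D(p\|p^M)=\sum_i I(A_1\cdots A_{i-1}:A_{i+1}|A_i)$. Each term is at most $\mathrm{poly}(n)e^{-l/\xi}=\mathrm{poly}(n)\,n^{-x}$, and this is exactly the content of Theorem B.1 of Ref.~\cite{yang_cmi_nns}. One caveat: the $\mathrm{poly}(|A|,|C|)$ prefactor must be absorbed. This works if the prefactor is treated as contributing at most $n^{c_0}$, or if the buffer length is adjusted slightly. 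Pinsker's inequality then gives $\|p-p^M\|_1\le O(\sqrt{m\,n^{-x+c_0}})$, which lies within $O(n^{-x/2+1+c_0})$.

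\emph{Error propagation, the main obstacle.} Conditionals $\tilde p_{A_{i+1}|A_i}$ can be wildly wrong when $p_{A_i}(a)$ is tiny, so one cannot simply bound the conditionals pointwise. Instead I would telescope, replacing one factor at a time:
\[
\|\tilde p-p^M\|_1\le\sum_i \sum_{a}\big\| p_{A_i}(a)\,\tilde p_{A_{i+1}|A_i}(\cdot|a)-p_{A_iA_{i+1}}(a,\cdot)\big\|_1 .
\]
Here the earlier factors are exact, and summing over them produces the exact marginal $p_{A_i}$, since $p^M$ has the correct marginals on $A_i$. The later factors are normalized conditionals, so they sum out to one. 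Each summand is at most $\|p_{A_i}-\tilde p_{A_i}\|_1+\|\tilde p_{A_iA_{i+1}}-p_{A_iA_{i+1}}\|_1\le 2\cdot O(n^{-x/2+c_0})$. This holds because $\tilde p_{A_i}(a)\tilde p_{A_{i+1}|A_i}=\tilde p_{A_iA_{i+1}}$, and the tiny-probability issue is thereby bypassed. Summing over $m\le n$ terms gives $O(n^{-x/2+1+c_0})$. Combining this with the Markov-approximation error via the triangle inequality yields the claim; the condition $x>2$ is what makes the bound vanish.
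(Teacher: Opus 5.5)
Your proposal is correct and follows the paper's proof essentially step for step: the same blocking with $l=x\xi\log_2 n$, tomography of each $p_{A_iA_{i+1}}$ with $2^{2l}$ local queries, a triangle inequality through the exact Markov chain $q=p^{M}$, and the same left-to-right hybrid telescoping (the paper's $h_k$) yielding $\|\tilde p-q\|_1\le(2m-3)\epsilon$. The differences are minor refinements. You use indicator queries with clipping and renormalization, which (unlike the paper's Fourier-Walsh inversion) guarantee the estimated marginals are genuine distributions. You also replace the citation of Theorem B.1 of Ref.~\cite{yang_cmi_nns} with a self-contained chain-rule-plus-Pinsker bound on $\|p-q\|_1$. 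That bound is tighter by roughly $\sqrt{m}$, though this does not change the final rate, which is dominated by the $(2m-3)\epsilon$ term. As you note, it still needs the $\mathrm{poly}(|A|,|C|)$ prefactor to be absorbed, an issue the paper also leaves inside the citation.
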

\begin{proof}
We begin with a lemma which reconstructs local marginals from local SQ oracles.
\begin{lemma}
A probability distribution $p$ of $L$ bits can be approximately obtained to error $\epsilon$ with $M = 2^{L}$ $L$-local SQ oracles with tolerance $\tau =  2^{-L} \epsilon$. If the L bits are spatially contiguous, then the oracles are SLSQ.
\end{lemma}
\begin{proof}
Consider the query functions $\phi_{S}(s) = \prod_{j\in S} s_j $ where $S$ is a subset of $[L]:=\{1,2,\cdots L\}$. This forms a complete orthonormal set of Boolean functions and thus
\begin{equation}
    p(s) = \frac{1}{2^L}\sum_{S} \langle \phi_S \rangle_p \phi_S(s),
\end{equation}
where $\langle \phi_S \rangle_p = \mathbb{E}_{s\sim p} \phi_S(s)$ is the Fourier-Walsh coefficient. The SQ oracle returns $v_{S}$ which
\begin{equation}
    |v_{S} - \langle \phi_S \rangle_p| < \tau
\end{equation}
We can thus reconstruct an approximated probability distribution
\begin{equation}
    \tilde{p}(s) = \frac{1}{2^L}\sum_{S} v_S \phi_S(s)
\end{equation}
The estimation error is upper bounded by
\begin{equation}
\begin{aligned}
    ||p- \tilde{p}||_1 &=  \frac{1}{2^L} \sum_s \left|\sum_S( v_S - \langle \phi_S\rangle_p) \phi_S(s) \right| \\
    &\leq \frac{1}{2^L} \sum_s \sum_S |v_S - \langle \phi_S\rangle_p| |\phi_S(s)| \\
    &\leq \frac{1}{2^L} \sum_s \sum_S \tau \cdot 1 \\
    &= 2^L \tau
\end{aligned}
\end{equation}
Therefore we use these $2^L$ query functions with tolerance $\tau =  2^{-L} \epsilon$ to each an error $||p-\tilde{p}||_1 \leq \epsilon$. 
\end{proof}

For a 1d distribution $p$ on $n$ spins, we divide it into $m = n/l$ intervals $A_1,\cdots A_m$ with length $l= x\xi \log_2 n$, where $x$ is a positive constant and $\xi$ is the Markov length. For each of the subsystems $A_i A_{i+1}$, we run the $L$-local SQ algorithm with $L=2l$ above. The total number of oracles is 
\begin{equation}
    M_{tot} = m 2^{2l} \leq n^{2x\xi + 1}
\end{equation}
and the tolerance is 
\begin{equation}
\label{eq:tau}
    \tau = 2^{-2l} \epsilon = n^{-2x\xi} \epsilon
\end{equation}
This achieves a small error $\epsilon$ on the marginals,
\begin{equation}
\label{eq:app_markov_marginal}
    ||p_{A_i A_{i+1}} - \tilde{p}_{A_i A_{i+1}} ||_1 \leq \epsilon.
\end{equation}
Finally we construct an approximation to the true distribution with the learned marginals $\tilde{p}_{A_i A_{i+1}}$ through the Markov distribution
\begin{equation}
    \tilde{p}_{A_1\cdots A_m} =  \tilde{p}_{A_1 A_2} \prod_{i=2}^{m-1} \tilde{p}_{A_{i+1}|A_i}.
\end{equation}
We will bound the error $||p-\tilde{p}||_1$ from the triangle inequality. We define another Markov distribution
\begin{equation}
 q_{A_1\cdots A_m} =  p_{A_1 A_2} \prod_{i=2}^{m-1} p_{A_{i+1}|A_i}
\end{equation}
by combining marginals from the true distribution. Following theorem B.1 of Ref.~\cite{yang_cmi_nns}, $q$ is an approximation to $p$ with small error,
\begin{equation}
    ||p-q||_1 \leq O(n^{-x/2 +c_0 + 1})
\end{equation}
for arbitrarily small positive constant $c_0>0$.

It remains to bound the distance between the two exact Markov distributions $\tilde{p}$ and $q$. We need the following lemma:
\begin{lemma}
\label{lemma:markov_chain_inequality}
    Given two Markov probability distributions $p_{ABC} = p_{AB}p_{C|B}$ and $q_{ABC}= q_{AB}q_{C|B}$, if $||p_{AB}-q_{AB}||_1 \leq \epsilon$ and $||p_{BC}-q_{BC}||_1 \leq \epsilon$, then $||p_{ABC}-q_{ABC}||_1 \leq 3\epsilon $. 
\end{lemma}
\begin{proof}
    Define $r_{ABC} = p_{AB}q_{C|B}$, then $||q_{ABC}-r_{ABC}||_1 = || (p_{AB}-q_{AB}) q_{C|B} ||_1 = ||p_{AB}-q_{AB}||_1 \leq \epsilon$ and
    \begin{equation}
        \begin{aligned}
            ||p_{ABC}-r_{ABC}||_1 &= ||p_{AB}(p_{C|B}-q_{C|B})||_1  \\
            &= ||p_{B}(p_{C|B}-q_{C|B})||_1  \\
            &= ||p_{BC}-p_Bq_{C|B}||_1  \\
            &\leq ||p_{BC}-q_{BC}||_1 + ||q_{BC} - p_Bq_{C|B}||_1  \\
            &= ||p_{BC}-q_{BC}||_1 + ||q_{B} - p_B||_1 \\
            &\leq 2||p_{BC}-q_{BC}||_1 = 2\epsilon
        \end{aligned}
    \end{equation}
    where we have used the triangle inequality in the 4th line and data processing inequality in the last line. Finally, $||p_{ABC}-q_{ABC}||_1\leq ||p_{ABC}-r_{ABC}||_1 + ||q_{ABC}-r_{ABC}||_1 \leq 3\epsilon$ by triangle inequality.
\end{proof}
Generalization of the lemma to $m$-parties, we have
\begin{equation}
\label{eq:markov_chain_inequality}
    ||\tilde{p}-q||_1 \leq (2m-3)\epsilon.
\end{equation}
To prove it, we introduce a series of auxiliary distributions 
\begin{equation}
    h_k = q_{A_1\cdots A_{k+1}} \prod_{j=k+1}^{m-1} \tilde{p}_{A_{j+1}|A_j}
\end{equation}
with $h_0 = \tilde{p}$ and $h_{m-1} = q$. We thus have
\begin{equation}
    ||\tilde{p}- q||_1 \leq \sum_{k=0}^{m-2} ||h_{k}-h_{k+1}||_1
\end{equation}
Following a similar proof as Lemma~\ref{lemma:markov_chain_inequality}, we have $||h_0-h_1||_1 \leq \epsilon$ and $||h_k-h_{k+1}||_1 \leq 2\epsilon$ ($k\geq 1$). Summing up the errors give Eq.~\eqref{eq:markov_chain_inequality}. 

Finally, the triangle inequality then implies that
\begin{equation}
    ||p -\tilde{p}||_1 \leq ||p-q||_1 + ||\tilde{p}-q||_1 
\end{equation}
We choose $\epsilon = O(n^{-x/2 + c_0})$, such that the second term is on the same order as the first term. Thus,
\begin{equation}
    ||p -\tilde{p}||_1 \leq O(n^{-x/2 +c_0 + 1})
\end{equation}
Using Eq.~\eqref{eq:tau} we need the tolerance to be $\tau = O(n^{-x(2\xi + 1/2) +c_0})$. 
\end{proof}
We can generalize the theorem to 1D periodic boundary conditions straightforwardly.
\begin{corollary}
    For a distribution $p$ on a 1D periodic chain with $n$ spins,  \\
    (1) If both for 1D periodic partitions in Fig.~\ref{fig:partitions}(b) and the 1D open partition for contiguous intervals $ |ABC|\leq n-O(\log n)$ that do not cover the whole system, the CMI decays exponentially with $|B|$, then there exists an efficient SLSQ algorithm that learns $p$. \\
    (2) If for the 1D periodic partition in Fig.~\ref{fig:partitions}(b) the CMI decays exponentially with $|B|$, then there exists an efficient LSQ algorithm that learns $p$.
\end{corollary}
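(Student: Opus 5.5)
Both parts reduce to the open-chain construction of Theorem~\ref{thm:appSQCMI}, adapted to the ring: learn a few marginals with the tomography lemma, splice them into a Markov-chain approximation, and use Lemma~\ref{lemma:markov_chain_inequality} to control the error. The difference between (1) and (2) is only in how the seam of the ring is handled.

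\textbf{Setup.} Cut the ring into $m=n/l$ consecutive intervals $A_1,\dots,A_m$ with $l=x\xi\log_2 n$, so that $A_m$ is adjacent to $A_1$. The difficulty compared with the open chain is that the last factor must condition on both $A_{m-1}$ and $A_1$. I would therefore aim for a reconstruction of the form
\begin{equation}
\tilde p=\tilde p_{A_1A_2}\prod_{i=2}^{m-2}\tilde p_{A_{i+1}|A_i}\;\tilde p_{A_m|A_{m-1}A_1}.
\end{equation}
The comparison object is $q$, built in the same way from the true marginals.

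\textbf{Bounding $\|p-q\|_1$.} Telescope as in Theorem B.1 of Ref.~\cite{yang_cmi_nns}. Each step replaces $p_{A_{i+1}|A_1\cdots A_i}$ by $p_{A_{i+1}|A_i}$. The Pinsker cost of that step is controlled by the CMI between $A_{i+1}$ and $A_1\cdots A_{i-1}$ conditioned on $A_i$.
\begin{itemize}
\item For $i+1<m$, the set $A_1\cdots A_{i+1}$ does not cover the ring. Its complement has length at least $l$, which is $O(\log n)$. These are therefore open-chain partitions with $|ABC|\le n-O(\log n)$, and hypothesis (1) gives a bound $\mathrm{poly}\cdot e^{-l/\xi}$.
\item The final step closes the ring. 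It replaces $p_{A_m|A_1\cdots A_{m-1}}$ by $p_{A_m|A_{m-1}A_1}$, which costs $I(A_m:A_2\cdots A_{m-2}|A_{m-1}A_1)$. This is exactly the periodic partition of Fig.~\ref{fig:partitions}(b), with the buffer split into two pieces of length $l$, and it is covered by the periodic hypothesis.
\end{itemize}
Summing the $m$ terms and using Pinsker's inequality gives $\|p-q\|_1\le O(n^{-x/2+1+c_0})$, as in the theorem.

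\textbf{Part (1).} The marginals $A_iA_{i+1}$ are contiguous and of length $2l$. The marginal $A_{m-1}A_mA_1$ is also contiguous on the ring, since it wraps around the seam, and has length $3l$. So the tomography lemma supplies all of them with SLSQ oracles: $O(n\cdot n^{3x\xi})$ queries at tolerance $n^{-3x\xi}\epsilon$. Propagating the $\epsilon$ errors to $\|\tilde p-q\|_1\le O(m\epsilon)$ uses Lemma~\ref{lemma:markov_chain_inequality} and the $h_k$ telescoping. The only new step is the last factor, where $B$ is taken to be $A_{m-1}A_1$; the proof of the lemma applies to it unchanged.

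\textbf{Part (2).} Here only periodic CMI decay is assumed, so open-interval splitting is unavailable. Instead I would use the ring structure with non-contiguous two-sided conditioning. Fix a cut and treat the pair of blocks $(A_1,A_{m})$ together with $(A_2,A_{m-1})$, and so on. This folds the ring into a quasi-1D chain of ``rungs'' $R_j=A_jA_{m+1-j}$. Each conditioning step $I(R_{j+1}:R_1\cdots R_{j-1}|R_j)$ is then a periodic partition of type (b): a two-piece buffer $R_j$ separates $R_{j+1}$ from the rest. Hypothesis (2) controls it.

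The marginals $R_jR_{j+1}$ consist of two far-apart windows of total size $4l=O(\log n)$. The tomography lemma learns them with $O(\log n)$-local, but not spatially local, Fourier-Walsh queries. These are LSQ oracles, which explains why part (2) delivers LSQ rather than SLSQ. The error propagation is again Lemma~\ref{lemma:markov_chain_inequality} along the folded chain.

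\textbf{Main obstacle.} The hardest step is checking that every CMI appearing in the telescoping is genuinely of the form covered by the hypotheses. In particular:
\begin{itemize}
\item The region where the folded rungs meet must be handled.
\item Where the two buffer pieces have unequal lengths, I would pad them so they are equal.
\item The $\mathrm{poly}(|A|,|C|)$ prefactors, with $|C|$ up to $n$, must be absorbed into the choice of $x$.
\end{itemize}
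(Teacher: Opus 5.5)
Your proposal is correct and follows essentially the paper's proof. For (1), the paper learns $p_{A_1\cdots A_{m-1}}$ with the open-chain theorem and learns the wrap-around marginal $p_{A_{m-1}A_mA_1}$ by tomography, then glues them at the cost $I(A_2\cdots A_{m-2}:A_m|A_1A_{m-1})$ exactly as in your last factor; for (2), it uses the same folding into two-window rungs learned with non-spatially-local tomography, with the only cosmetic differences being that you unroll the open-chain theorem into an explicit chain and root the fold at the seam pair $A_1A_m$ (rungs $A_jA_{m+1-j}$) rather than at $A_mA_1A_2$ (rungs $A_iA_{m-i+2}$).
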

\begin{proof}
    To prove (1), we first use theorem~\ref{thm:appSQCMI} to learn $p_{A_1 A_2\cdots A_{m-1}}$ and $p_{A_{m-1} A_m A_1}$ with $\mathrm{poly}(n)$ GLSQ oracles. The algorithm outputs $\tilde{p}_{A_1 A_2\cdots A_{m-1}}$ and $\tilde{p}_{A_{m-1} A_m A_1}$ that approximate the two distributions, with
    \begin{equation}
        ||p_{A_1 A_2\cdots A_{m-1}}-\tilde{p}_{A_1 A_2\cdots A_{m-1}}||_1 = \mathrm{poly}(n)^{-1},~~ ||p_{A_{m-1} A_m A_1}-\tilde{p}_{A_{m-1} A_m A_1}||_1 = \mathrm{poly}(n)^{-1}.
    \end{equation}
We then construct
\begin{equation}
    \tilde{p}_{A_1 A_2 \cdots A_m} = \tilde{p}_{A_1 A_2\cdots A_{m-1}} \tilde{p}_{A_m|A_{m-1}, A_1}
\end{equation}
which is an approximation to $q = p_{A_1 A_2\cdots A_{m-1}} p_{A_m|A_{m-1}, A_1}$. The error $||\tilde{p}-q||_1$ is also polynomially small in $n^{-1}$ due to Lemma~\ref{lemma:markov_chain_inequality}. Furthermore
\begin{equation}
    ||p-q||_1 \leq \sqrt{2D_{KL}(p||q)} = \sqrt{2 I(A_2\cdots A_{m-2}:A_m|A_1,A_{m-1})} =\mathrm{poly}(n)^{-1}
\end{equation}
given the exponential decay of CMI in the 1D periodic partition with $|A_1| = |A_{m-1}| = O(\log n)$. By the triangle inequality $||p-\tilde{p}||_1$ is also polynomially small.

To prove (2), we perform state tomography to obtain $\tilde{p}_{A_{m} A_1 A_2}$ and $\tilde{p}_{A_i A_{i+1}A_{m-i+1}A_{m-i+2}}$ for $i=2, 3 \cdots m/2$. The first only requires SLSQ but the latter requires LSQ because it is a disjoint union of two intervals. We then reconstruct
\begin{equation}
    \tilde{p}_{A_1 \cdots A_m} = \tilde{p}_{A_{m} A_1 A_2} \prod_{i=1}^{m/2} \tilde{p}_{ A_{i+1}A_{m-i+1}|A_i,A_{m-i+2}}
\end{equation}
and the same proof of theorem~\ref{thm:appSQCMI} bounds the distance $||p-q||_1$ to be polynomially small.
\end{proof}

\subsection{Distributions on higher-dimensional lattice}
We briefly comment on generalization of Theorem~\ref{thm:appSQCMI} to higher dimensions.

For a 2D lattice, the partition can be much more diverse. The most commonly used Markov length is defined for the annulus partition (Fig.~\ref{fig:partitions}(c)). But finite Markov length in this partition does not imply that the state is efficiently learnable under LSQ. In fact, the counterexample is the loop model $\rho_{l_1 l_2}$ with logical constraints $l_1 = \pm1$ or $l_2=\pm1$ (see appendix~\ref{sec:genSQ} subsection 3 and Fig.~\ref{fig:LoopModelDiagram}). The state has zero Markov length but is hard to learn under LSQ due to LI with the loop model with no logical constraints.

Here we adopt the strip partition in Fig.~\ref{fig:partitions}(d) and (e). For all the quasi-1d strips of width $O(\log n)$, we require the CMI to be exponentially decaying in the length of $B$. Then one can learn the marginal distributions on the strips by multiplying marginals on $O(\xi\log n) \times O(\xi\log n)$ squares with polynomially small errors. Our proof of theorem~\ref{thm:appSQCMI} then directly generalizes. However, the full tomography of the squares cost $2^{O((\xi\log n)^2)}$, which is quasi-polynomial instead of polynomial. As a result, the marginal distributions on the strips can be learned with a quasi-polynomial number of SLSQ oracles to get a polynomially small error in trace distance. Finally, we require the CMI exponentially decaying in the width of $B$ for the the partition in Fig.~\ref{fig:partitions}(e). This allows us to glue the marginals on the strips to obtain an approximation to the full distribution $p$. To summarize, under the finite Markov length condition for both partitions (d) and (e), we have an algorithm to learn the full distribution with quasi-polynomial number of SLSQ oracles and polynomially small error.

Before moving on, we comment on the partition in Ref.~\cite{yang_cmi_nns}. We note that Ref.~\cite{yang_cmi_nns} adopts a partition determined by a 1d snake-like path (the Fig. 2(b) of Ref.~\cite{yang_cmi_nns}), which is distinct from both the annulus partition and the strip partition. If the CMI decays exponentially for this partition, then the distribution $p$ can again be approximated by multiplying marginals on $O(\xi\log n) \times O(\xi\log n)$ squares with polynomially small errors. Therefore, it also takes a quasi-polynomial number of SLSQ oracles.

\section{Long-range CMI and local indistinguishability}
In this section, we show the relation between long-range CMI of a classical distribution and existence of locally indistinguishable partner. We will first prove the result on tripartite systems and then apply it to the many-body setting.
\subsection{Tripartite systems}
\begin{theorem}
\label{thm:LICMI}
    Let $p_{ABC}$ be a tripartite classical distribution. The following two statements hold:\\
    (1) If $I(A:C|B) = \epsilon > 0$, then there exists a locally indistinguishable state $q_{ABC}$ such that $p_{AB} = q_{AB},~ p_{BC}=q_{BC}$ and $||p-q||_1 \geq \min\{1, T_0(\epsilon)\}$, where $T_0$ is the solution to $4T \log d_{min} + 2H(T) = \epsilon$ with $H(T):= -T\log T - (1-T) \log (1-T)$ and $d_{min}:=\min\{d_A, d_C\}$. \\
    (2) If there exists a locally indistinguishable state $q_{ABC}$ such that $||p_{AB} - q_{AB}||_1<T',~ ||p_{BC}-q_{BC}||<T'$ and $||p-q||_1 = T>3T'$, then at least one state of $p_{ABC}$ and $q_{ABC}$ has a CMI lower bound $I(A:C|B)\geq \frac{1}{8}(T-3T')^2$.
\end{theorem}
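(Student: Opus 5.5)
For part (1) I would take the partner to be the Markov reconstruction
\[
q_{ABC} = p_{AB}\,p_{C|B}.
\]
\emph{Local indistinguishability.} By construction $q_{AB}=p_{AB}$. Summing over $a$ gives $q_{BC}=p_B\,p_{C|B}=p_{BC}$. So both marginal conditions hold exactly.

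\emph{Global distinguishability.} The standard identity $D_{KL}(p\|q)=I(A:C|B)_p=\epsilon$ holds, and $q$ is Markov, so $I(A:C|B)_q=0$. I would lower-bound $T:=\|p-q\|_1$ using a continuity bound for CMI. Write
\[
I(A:C|B)=H(A|B)-H(A|BC).
\]
Apply the classical Alicki--Fannes--Winter continuity bound for conditional entropy to each term. In the classical case this reads $|H(A|X)_p-H(A|X)_q|\le t\log d_A + H(t)$, with $t$ the total variation distance; I would use a slightly loose form of it. Marginalising does not increase distance, so each term contributes at most $2T\log d_A + H(T)$ up to the normalisation of $T$, and this gives
\[
\epsilon = |I_p - I_q| \le 4T\log d_A + 2H(T).
\]
Exchanging the roles of $A$ and $C$, via $I(A:C|B)=H(C|B)-H(C|AB)$, lets me replace $d_A$ by $d_{min}$. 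The function $f(T)=4T\log d_{min}+2H(T)$ is continuous and increasing on the relevant range starting from $f(0)=0$. Hence $f(T)\ge\epsilon$ forces $T\ge T_0(\epsilon)$ whenever the root exists, and otherwise the bound is capped at $1$. This gives $\|p-q\|_1\ge\min\{1,T_0(\epsilon)\}$.

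The main obstacle is this step. I need to pin down the exact classical continuity constant and trace-norm normalisation so that the coefficients come out as $4$ and $2$. I also need to check monotonicity of $f$ on the interval where I invert it; for $T$ beyond the maximiser of $H$ I would fall back to the $\min\{1,\cdot\}$ cap. If the AFW constants come out slightly different, I would first try absorbing them through the looser $4T\log d_{min}$ term.

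For part (2) I would compare both distributions to their Markov versions,
\[
p'=p_{AB}\,p_{C|B}, \qquad q'=q_{AB}\,q_{C|B}.
\]
Lemma~\ref{lemma:markov_chain_inequality} applies to the pair $p',q'$, since $p'_{AB}=p_{AB}$ and $p'_{BC}=p_{BC}$ (and likewise for $q'$). With the hypotheses it gives $\|p'-q'\|_1\le 3T'$. The triangle inequality then yields
\[
T=\|p-q\|_1\le \|p-p'\|_1+3T'+\|q-q'\|_1,
\]
so one of $\|p-p'\|_1$ and $\|q-q'\|_1$ is at least $(T-3T')/2$; say it is $p$. Pinsker's inequality together with $D_{KL}(p\|p')=I(A:C|B)_p$ then gives
\[
I(A:C|B)_p\ge \tfrac12\|p-p'\|_1^2\ge \tfrac18(T-3T')^2,
\]
which is the claim. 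This part is routine once the Markov identity $D_{KL}(p\|p_{AB}p_{C|B})=I(A:C|B)$ is recorded.
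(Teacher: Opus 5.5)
Your proposal is correct and follows the paper's proof. Part (1) uses the same Markov partner $q=p_{AB}p_{C|B}$ and an Alicki--Fannes continuity bound. Part (2) uses the same comparison to Markov approximations via Lemma~\ref{lemma:markov_chain_inequality}, Pinsker and the triangle inequality, argued directly rather than by contradiction.

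The constant issue you flag goes away once you note that $H(A|B)_p=H(A|B)_q$ exactly, because $p_{AB}=q_{AB}$. So $\epsilon=H(A|BC)_q-H(A|BC)_p$ is a single conditional-entropy difference, not two. The original Alicki--Fannes inequality, $4T\log d_A+2H(T)$ for $\|p-q\|_1=T\le 1$, then gives the stated constants directly; the paper does exactly this with $S(C|AB)$.

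You also do not need monotonicity of $f$, and the fallback to the cap only covers $T>1$. Since $f(0)=0<\epsilon\le f(T)$ and $f$ is continuous, taking $T_0$ as the smallest root already forces $T\ge T_0$.
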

\begin{proof}
    To prove (1), we consider the Markov distribution $q_{ABC} = p_{AB} p_{C|B}$. The marginals are the same as $p_{AB} = q_{AB}$ follows from $\sum_c p_{C|B}(c|b) = 1$ and $p_{BC} = q_{BC}$ follows from $\sum_{a} p_{AB}(a,b) = p_B(b)$. To prove the trace distance, we note that $I(A:C|B)_p = I(A:C|B)_q -S({ABC})_p+S({ABC})_q$. Since $I(A:C|B)_q = 0$, we have $S(ABC)_q-S(ABC)_p = \epsilon$. Note that $S(AB)_q = S(AB)_p$, we therefore have $S(C|AB)_q - S(C|AB)_p = \epsilon$. Now we use the Alicki-Fannes inequality for the case of $T = ||p-q||_1\leq 1$: 
    \begin{equation}
       |S(C|AB)_p -S(C|AB)_q|\leq 4 T \log d_C + 2H(T) 
    \end{equation}
    If $T\leq 1$ , this gives a lower bound of $T \geq T_0 (\epsilon)$, where $T_0$ is the smallest solution to $4 T \log d_C + 2H(T) = \epsilon$. We may alternatively use $|S(A|BC)_p - S(A|BC)_q| =\epsilon$, which gives another bound with $d_C$ and $d_A$ exchanged. Therefore, if either $d_A$ and $d_C$ are $O(1)$ and $T\leq 1$, then $T \geq T_0 (\epsilon)$ where $T_0$ is the solution to $4T \log d_{min} + 2H(T) = \epsilon$. If $T>1$, the bound on $||p-q||_1$ is trivially satisfied.

    To prove (2), suppose that both $p$ and $q$ satisfy $I(A:C|B) <  \epsilon$. Define $\tilde{p}_{ABC} = p_{AB} p_{C|B}$ and $\tilde{q}_{ABC}= q_{AB} q_{C|B}$. Then $D_{KL}(p||\tilde{p}) = I(A:C|B)_{p} < \epsilon$ and $D_{KL}(q||\tilde{q}) = I(A:C|B)_{q} < \epsilon$. Using the Pinsker inequality
    \begin{equation}
        D_{KL}(p||q) \geq \frac{1}{2} ||p-q||^2_1
    \end{equation}
    we obtain $||p-\tilde{p}||_1 < \sqrt{2\epsilon}$ and $||q-\tilde{q}||_1 < \sqrt{2\epsilon}$. Using Lemma~\ref{lemma:markov_chain_inequality}, we have $||\tilde{p}-\tilde{q}||_1 \leq 3T$.
    The triangle inequality implies that $ T =||p-q||_1 \leq ||p-\tilde{p}||_1+ ||q-\tilde{q}||_1 + ||\tilde{p}-\tilde{q}||_1  < 2\sqrt{2\epsilon} + 3T' $. If $\epsilon \leq \frac{1}{8} (T-3T')^2$, we would get a contradiction. Thus, one of the CMIs in $p_{ABC}$ and $q_{ABC}$ must be at least $\frac{1}{8}(T-3T')^2$.
\end{proof}
\subsection{Distributions on 1D open chain}
We can apply the theorem \ref{thm:LICMI} straightforwardly to a 1d distribution of $n$ spins on an open chain. We choose $A$ to be spins 1 to $l_A$, $B$ to be spins $l_A+1$ to $l_A+l_B$ and $C$ is the rest of the system (see Fig.~\ref{fig:partitions}(a)). We have the following corollary of theorem \ref{thm:LICMI}.
\begin{corollary}[Local indistinguishability of long-range CMI distributions]
    If the classical distribution $p_{ABC}$ on $n$ spins has $I(A:C|B)  = O(1)$  for $l_A = O(1)$ and $l_B = O(n)$, then there exists another distribution $q_{ABC}$ such that it is $s$-locally indistinguishable from $p_{ABC}$ on any contiguous $l_B$ spins but globally distinguishable with $||p-q||_1 = O(1)$. 
\end{corollary}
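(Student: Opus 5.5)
We take the Markov partner $q_{ABC}=p_{AB}\,p_{C|B}$ from Theorem~\ref{thm:LICMI}(1) and check the two requirements of the corollary separately: global distinguishability comes from the entropy bound of part (1), and $s$-local indistinguishability from the fact that $q$ reproduces the marginals of $p$ on $AB$ and on $BC$.

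\textbf{Global distinguishability.} Theorem~\ref{thm:LICMI}(1) gives $\|p-q\|_1\ge \min\{1,T_0(\epsilon)\}$ with $\epsilon=I(A:C|B)=\Theta(1)$. Since $l_A=O(1)$, we have $d_{\min}\le d_A=2^{l_A}=O(1)$. The function $4T\log d_{\min}+2H(T)$ is continuous, vanishes at $T=0$, and does not depend on $n$. Hence its smallest root $T_0(\epsilon)$ is a positive constant independent of $n$, and $\|p-q\|_1=\Omega(1)$. Here it matters that we use the $A$-side Alicki--Fannes bound: on the $C$ side, $\log d_C=O(n)$ would destroy the constant.

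\textbf{Marginal agreement.} Take any contiguous window $W$ with $|W|\le l_B$. Because $A$ and $C$ are separated by $B$, whose length is $l_B$, such a window cannot meet both $A$ and $C$. So either $W\subseteq AB$ or $W\subseteq BC$. Theorem~\ref{thm:LICMI}(1) gives $q_{AB}=p_{AB}$ and $q_{BC}=p_{BC}$, so $q_W=p_W$ exactly. Therefore $\operatorname{tr}((p-q)\phi)=0$ for every $\phi$ supported in $W$. This is $s$-LI with $\delta=0$, which satisfies Eq.~\eqref{eq:LI} trivially.

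\textbf{Main obstacle.} The delicate point is the window-size count. If $W$ is allowed to have $l_B+2$ spins, it can contain the last site of $A$, all of $B$, and the first site of $C$. To rule this out, I would read ``contiguous $l_B$ spins'' strictly as windows of at most $l_B$ sites. If needed, I would shrink the admissible window by one or two sites, which does not affect the $O(n)$ scaling.

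In the periodic case $A$ and $C$ are also adjacent across the boundary, so the argument needs the open-chain geometry stated in the corollary. I would remark that this restriction is necessary.
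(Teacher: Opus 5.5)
Your proposal is correct and follows the paper's proof essentially verbatim. The Markov partner $q=p_{AB}p_{C|B}$ matches $p$ on $AB$ and on $BC$, every contiguous window of $l_B$ sites lies inside one of these, and the $A$-side Alicki--Fannes bound with $\log d_A = l_A = O(1)$ yields a constant $T_0$; one small correction is that your closing aside overstates things, since the paper notes the periodic case also goes through if $B$ is taken to be $l_B$ spins on each side of $A$.
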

\begin{proof}
    By applying theorem \ref{thm:LICMI}, we have $p_{AB} = q_{AB}$ and $p_{BC} = q_{BC}$. Any contiguous $l_B$ spins either belongs to $AB$ or $BC$, thus $p_{ABC}$ and $q_{ABC}$ cannot be distinguished from the marginals on these spins. Finally, both $\log d_A = l_A = O(1)$ and $\epsilon = O(1)$ in $4T\log d_A + 2H(T) = \epsilon$, implying that the solution $T_0 = O(1)$. Therefore, $||p-q||_1 = O(1)$ and the states are globally distinguishable. 
\end{proof}
The case of periodic boundary condition can be directly generalized, where we require $B$ to be the union of $l_B$ spins to the right of $A$ and $l_B$ spins to the left of $A$.

For the converse statement, if we have two locally indistinguishable states on $O(n)$ spins, then one of them must have long-range CMI. 
\begin{corollary}[Long-range CMI for locally indistinguishable states]
    Given two classical distributions $p$ and $q$ of $n$ spins on an open chain. If they are $s$-locally indistinguishable such that $||p_I-q_I||_1 \leq T_l$ on any contiguous interval $I$ of $l$ spins, and $||p-q||_1  = T > \frac{4T_l n}{l}$, then for either $p$ or $q$ there exists contiguous intervals $A,B,C$ such that $I(A:C|B)\geq \frac{1}{2}(\frac{Tl}{2n}-2T_l)^2$ and $|B| = l$. If $l=O(n)$, this puts an $O(1)$ lower bound on the long-range CMI. 
    \label{thm:LI_imply_CMI}
\end{corollary}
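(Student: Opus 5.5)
We argue by contradiction, turning the hypothesis ``all CMIs with buffer of length $l$ are small'' into a global Markov approximation, and then using the gluing bounds already established (Lemma~\ref{lemma:markov_chain_inequality} and its $m$-party extension, Eq.~\eqref{eq:markov_chain_inequality}). Suppose that both $p$ and $q$ satisfy $I(A:C|B)<\epsilon$ for every choice of contiguous intervals $A,B,C$ with $|B|=l$. Partition the chain into consecutive blocks $A_1,\dots,A_m$ of length $l$, so $m=n/l$. Define the Markov surrogates
\begin{equation}
\tilde p = p_{A_1A_2}\prod_{i=2}^{m-1} p_{A_{i+1}|A_i},\qquad \tilde q = q_{A_1A_2}\prod_{i=2}^{m-1} q_{A_{i+1}|A_i}.
\end{equation}

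\textbf{Step 1: each surrogate is close to its distribution.} Peel off one block at a time. Set $p^{(k)} = p_{A_1\cdots A_k}\prod_{i\ge k}p_{A_{i+1}|A_i}$. Then $p^{(k)}$ and $p^{(k+1)}$ differ only in replacing $p_{A_{k+1}|A_1\cdots A_k}$ by $p_{A_{k+1}|A_k}$, with the same subsequent conditionals. By data processing (the later conditionals act as a common channel) followed by Pinsker,
\begin{equation}
\|p^{(k+1)}-p^{(k)}\|_1\le \sqrt{2 I(A_1\cdots A_{k-1}:A_{k+1}|A_k)_p}<\sqrt{2\epsilon}.
\end{equation}
Here the buffer is $B=A_k$ with $|B|=l$, as the statement allows. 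Summing the telescoping series gives $\|p-\tilde p\|_1< m\sqrt{2\epsilon}$, and likewise $\|q-\tilde q\|_1<m\sqrt{2\epsilon}$.

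\textbf{Step 2: the surrogates are close to each other.} Every marginal $A_iA_{i+1}$ lives on a contiguous interval of $2l$ spins. I would apply the hypothesis on such intervals (or, to stay strictly on $l$-spin windows, note that a $2l$ interval requires at most a constant factor more; the constant $4$ in the hypothesis $T>4T_l n/l$ seems designed to absorb exactly this). Applying Eq.~\eqref{eq:markov_chain_inequality} then gives $\|\tilde p-\tilde q\|_1\le (2m-3)T_l<2mT_l$.

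\textbf{Step 3: combine and extract the bound.} By the triangle inequality,
\begin{equation}
T\le 2m\sqrt{2\epsilon}+2mT_l .
\end{equation}
With $m=n/l$ this rearranges to $\sqrt{2\epsilon}\ge \frac{Tl}{2n}-T_l$. Choosing $\epsilon=\frac12\big(\frac{Tl}{2n}-2T_l\big)^2$, which matches the statement's form once the factor from using $2l$-windows is included, gives a contradiction. The hypothesis $T>4T_ln/l$ is precisely what makes this quantity positive. Hence one of $p,q$ has a CMI of at least this size with $|B|=l$. For $l=O(n)$ we have $m=O(1)$, so the bound is $O(1)$.

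The main obstacle is the constant bookkeeping in Steps 1--2. This includes reconciling window size $l$ versus $2l$ in the $s$-LI hypothesis, and the fact that Step 1 conditions on the whole past rather than a single block. If the stated constants do not come out exactly, I would choose blocks of length $l/2$ so that the pair marginals fit in $l$-windows, and adjust the count $m=2n/l$ accordingly. This is the likely source of the factors of $2$ in the statement.
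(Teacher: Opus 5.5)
Your architecture (Markov surrogates $\tilde p,\tilde q$, a telescoping CMI-plus-Pinsker bound for $\|p-\tilde p\|_1$, the $m$-party gluing lemma for $\|\tilde p-\tilde q\|_1$, then the triangle inequality) is the paper's, and your Step 1 is correct. The genuine gap is in Step 2 of your primary route. Closeness of all contiguous $l$-spin marginals gives no control over $2l$-spin marginals, so there is no constant factor to absorb. Here is a concrete case. Take $p$ uniform, and let $q$ draw $x_1,\dots,x_l$ uniformly and set $x_j=x_{j-l}$ for $j>l$. Every contiguous $l$-window of $q$ meets each residue class mod $l$ once, so it is uniform and $T_l=0$, while $\|p-q\|_1=2-2^{l-n+1}$. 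But under $q$ any contiguous $B$ with $|B|=l$ determines the whole string, so $I(A:C|B)_q=0$, and all CMIs of the product distribution $p$ vanish. So the statement with $|B|=l$ is false, and blocks of length $l$ cannot work. Note also that $4T_ln/l=2mT_l$ with $m=2n/l$. The constant $4$ therefore comes from half-length blocks, not from enlarged windows.

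Your fallback (blocks of length $l/2$, $m=2n/l$, so each $A_iA_{i+1}$ is an $l$-window) is exactly the paper's choice and is sound. However, the conditioning block in your Step 1 then has length $l/2$, so you obtain a CMI lower bound only with $|B|=l/2$. In the example above, $I(A:C|B)_q=\frac{l}{2}\log 2$ when $A,C$ are the $l/2$-blocks flanking such a $B$. The paper's proof has the same mismatch: it calls $A_k$ ``an interval of $l$ spins'' while giving it length $a=l/2$.

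The constant also does not come out as stated. Your own correct inequality $T<2m(\sqrt{2\epsilon}+T_l)$ with $m=2n/l$ gives a contradiction only for $\epsilon\le\frac{1}{2}(\frac{Tl}{4n}-T_l)^2=\frac{1}{8}(\frac{Tl}{2n}-2T_l)^2$. The positivity condition $T>4T_ln/l$ comes out exactly, but the prefactor is $\frac{1}{8}$. The paper's $\frac{1}{2}$ arises from bounding $\|p-\tilde p\|_1+\|q-\tilde q\|_1$ by $m\sqrt{2\epsilon}$ instead of $2m\sqrt{2\epsilon}$. What the argument actually proves is that $p$ or $q$ has $I(A:C|B)\ge\frac{1}{8}(\frac{Tl}{2n}-2T_l)^2$ for some contiguous $A,B,C$ with $|B|=l/2$. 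This still gives an $O(1)$ bound when $l=O(n)$.
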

\begin{proof}
    We again prove by contradiction. Suppose that for all contiguous $A,B,C$ with $|B|=l$ we have $I(A:C|B)<\epsilon$ for both distributions.
    Let $A_k = [1 + ka,(k+1)a]$ be a interval of $l$ spins and $a=l/2$. Then we define $\tilde{p} = p_{A_1 A_2} p_{A_3|A_2} p_{A_4|A_3} \cdots p_{A_m|A_{m-1}}$ and $\tilde{q} = q_{A_1 A_2} q_{A_3|A_2} q_{A_4|A_3} \cdots q_{A_m|A_{m-1}}$, where $m=n/a = 2n/l$. The $m$-party generalization of lemma~\ref{lemma:markov_chain_inequality} implies that
    \begin{equation}
        ||\tilde{p}-\tilde{q}||_1 < (2m-3) T_l <2mT_l
    \end{equation}
    To bound the distance between $p$ and $\tilde{p}$, we make sequential use of lemma~\ref{lemma:markov_chain_inequality} and the conditions of small CMI. The marginals on the first three parties are close because 
    \begin{equation}
        ||p_{A_1 A_2 A_3} - \tilde{p}_{A_1 A_2 A_3}||_1 \leq \sqrt{2 D_{KL}(p_{A_1A_2A_3}||\tilde{p}_{A_1A_2A_3})} \leq \sqrt{2\epsilon}
    \end{equation}
    Going to the first four parties, we use
    \begin{equation}
    \begin{aligned}
         ||p_{A_1 A_2 A_3 A_4} - \tilde{p}_{A_1 A_2 A_3 A_4}||_1 &\leq ||p_{A_1 A_2 A_3 A_4} - p_{A_1 A_2 A_3} p_{A_4|A_3}||_1 +  ||p_{A_1 A_2 A_3} p_{A_4|A_3} - \tilde{p}_{A_1 A_2 A_3} p_{A_4|A_3}||_1 \\
         &\leq \sqrt{2\epsilon} + ||p_{A_1 A_2 A_3} - \tilde{p}_{A_1 A_2 A_3}||_1 \\
         &\leq 2\sqrt{2\epsilon} 
    \end{aligned}
    \end{equation}
    Going to $m$-parties, we have
    \begin{equation}
        ||p-\tilde{p}||_1 \leq (m-2)\sqrt{2\epsilon} < m\sqrt{2\epsilon}
    \end{equation}
    Therefore, $ T = ||p-q||_1 \leq  ||p-\tilde{p}||+||q-\tilde{q}|| + ||\tilde{p}-\tilde{q}||_1 < m(\sqrt{2\epsilon} + 2T_l)$. One would get a contradiction if $\epsilon<\frac{1}{2}(\frac{T}{m}-2T_l)^2$, which completes the proof.
\end{proof}

\subsection{Distributions on higher-dimensional lattice}
Finally, we comment on the generalizations to higher dimensions. 

The direction of long-range CMI implying $s$-LI can be directly generalizes to annulus partition (Fig.~\ref{fig:partitions}(c)) in any dimensions. Let $A$ and $AB$ be concentric balls with the radius of $A$ being $O(1)$, and the radius difference between $A$ and $AB$ to be $l_B$. If $I(A:C|B) = O(1)$, then we can construct a state $q_{ABC} = p_{AB} p_{C|B}$ that is $s$-LI with $p_{ABC}$ on any ball of diameter $l_B$. 

As we noted in the subsection 2 of appendix~\ref{sec:AppE}, the reverse direction (LI implying long-range CMI) does not hold on the annulus partition in 2D. However, for the strip partition (Fig.~\ref{fig:partitions}(e)), we can generalize the direction of $s$-LI implying long-range CMI. This can be easily seen because that Corollary \ref{thm:LI_imply_CMI} does not depend on the local Hilbert space dimension.


\section{Justification of local SQ assumption}
\label{sec:AppG}
In this section we derive in more detail the condition that unsupervised learning of a distribution is described by a local SQ model. In the first subsection, we show that LSQ model describes the noisy gradient descent training of LI states, given that one of the states is quasi-local Gibbsian. The result can be directly applied to SLSQ model if notion of locality is defined through spatial locality. In the second subsection, we show that SLSQ model describes training of a particular class of RNNs. 

\subsection{Local SQ from local Gibbs training trajectory}
Under the noisy gradient descent training, we consider training with target state $\rho_{\text{ref}}$. The training gives a distribution $\mathbb{P}$ over trajectories $\{ \theta(0),...,\theta(T)\}$ of neural network parameters where each $\theta(t)$ is a vector of parameters. We make the assumption that $\rho_{\theta(t)}$ has a large probability of being Gibbs state of certain local Hamiltonians throughout the training process $t=0, \cdots T$. 

Before moving on, we comment on the intuition why we pose the ansatz that $\rho_{\theta(t)}$ is local Gibbsian. For distributions on a 1D lattice, the assumption is natural if both the initialization $\rho_{\theta(0)}$ and the target distribution $\rho_{\mathrm{ref}}$ have a finite Markov length. Recall Proposition 1 of Ref.~\cite{Aragon_s_Soria_2021}, which states that a distribution $\rho_{\theta}$ with decaying CMI is quasi-local Gibbsian, i.e., it can be well approximated by a Gibbs state of a local Hamiltonian. Therefore, both the initialization and the target are quasi-local Gibbsian. Furthermore, the neural networks (RNNs, CNNs and transformers) are empirically seen to prioritize learning local w and the local expectation values are sufficient to reconstruct the state $\rho_{\text{ref}}$ (the latter shown in App.~\ref{sec:AppE}). Thus, it is natural to think of the intermediate states $\rho_{\theta(t)}$ being Gibbs states of local Hamiltonians and interpolate between the two quasi-local Gibbs distributions. For higher dimensions, it is not firmly established if finite Markov length implies quasi-local Gibbsian distribution, so we will directly assume that $\rho_{\mathrm{ref}}$ is quasi-local Gibbsian. We also note that the notion of locality in this section, whether it be geometric locality or few-body locality, could depend on the architecture used. 

Now we consider the noisy gradient descent training targeting a state $\rho$ that is LI with $\rho_{\text{ref}}$. We will show that under the assumption of local Gibbs training trajectory above together with common regularity conditions, the training trajectories targeting $\rho$ and $\rho_{\mathrm{ref}}$ follow super-polynomially close distributions. Define $h(\theta) $ as the difference of the cross entropies:
\begin{equation}
    h(\theta) := \tr(\rho\log \rho_{\theta})- \tr(\rho_{\text{ref}} \log \rho_{\theta}).
\end{equation}
Since $\rho_{\theta(t)}$ has a large probability to be a Gibbs state of a local Hamiltonian of support on $l = O(1)$ sites, it follows from LI that
\begin{equation}
\label{eq:closeness}
    \mathbb{E}_{\theta(t) \sim \mathbb{P}} \left[ \left|h(\theta(t))\right| \right] < \delta
\end{equation}
Where $\delta \sim O(e^{-n^{1-c_0 }/l})$. Additionally, we will assume that the Hessians of the loss functions are bounded everywhere,
leading to 
\begin{equation}
    ||\nabla_{\theta_i} \nabla_{\theta_j} h(\theta) ||\leq L = \mathrm{poly}(n),
\end{equation}
where the norm $||\cdot||$ is the operator norm.
Finally, we will assume a regularity condition on the measure $\mathbb{P}$ that the expectation value of $h(\theta)$ under a super-polynomially small scalar perturbation $\epsilon$ to the parameter $\theta_i$ is still bounded, that is
\begin{equation}
\label{eq:closeness2}
    \mathbb{E}_{\theta \sim \mathbb{P}} [h(\theta(t) + \epsilon \mathbf{e}_i)] \leq \delta
\end{equation}
where $\mathbf{e}_i$ is a unit vector that perturbs $\theta_i$.

We proceed by expanding $h(\theta')$ around $\theta$ with respect to the component $\theta_i$. Let $\theta' = \theta + \epsilon \mathbf{e}_i$. There exists a point $\xi$ lying on the line segment connecting $\theta$ and $\theta'$ such that:
\begin{equation}
    h(\theta') = h(\theta) + \nabla_{\theta_i} h(\theta) \epsilon + \frac{1}{2} \epsilon^2 \nabla^2_{\theta_i} h(\xi)
\end{equation}
Rearranging the terms, we obtain:
\begin{equation}
    |\nabla_{\theta_i} h(\theta)| = \left|\frac{h(\theta') - h(\theta)}{\epsilon} - \frac{1}{2} \epsilon\nabla^2_{\theta_i} h(\xi)\right|
\end{equation}
Taking the expectation with respect to the distribution $\mathbb{P}$ over the trajectories on both sides yields:
\begin{equation}
    \mathbb{E}_{\theta \sim \mathbb{P}} \left[\left| \nabla_{\theta_i} h(\theta) \right| \right] = \mathbb{E}_{\theta \sim \mathbb{P}} \left[ \left|\frac{h(\theta') - h(\theta)}{\epsilon} - \frac{1}{2} \epsilon\nabla^2_{\theta_i} h(\xi)\right|\right]
\end{equation}
With the use of triangle inequality this becomes:
\begin{align}
     \mathbb{E}_{\theta \sim \mathbb{P}} \left[ \left|\nabla_{\theta_i} h(\theta) \right|\right]  &\leq \mathbb{E}_{\theta \sim \mathbb{P}} \left[ \left|\frac{h(\theta') - h(\theta)}{\epsilon}\right|\right] + \mathbb{E}_{\theta \sim \mathbb{P}}\left[ \left|\frac{1}{2} \epsilon\nabla^2_{\theta_i} h(\xi)\right|\right] \\ 
     &\leq \frac{\mathbb{E}_{\theta \sim \mathbb{P}} [\left| h(\theta')\right|]  + \mathbb{E}_{\theta \sim \mathbb{P}} [ \left|h(\theta)\right|] }{|\epsilon|} +  \mathbb{E}_{\theta \sim \mathbb{P}} \left[ \left|\frac{1}{2} \epsilon \nabla^2_{\theta_i} h(\xi) \right|\right] 
\end{align}
Using the assumptions Eqs.~\eqref{eq:closeness} - \eqref{eq:closeness2}, we arrive at:
\begin{equation}
     \mathbb{E}_{\theta \sim \mathbb{P}} \left[\left| \nabla_{\theta_i} h(\theta) \right|\right]  \leq \frac{2\delta}{|\epsilon|} + \frac{L}{2} |\epsilon|
\end{equation}
This holds for any $\epsilon$, and we can choose $\epsilon = \sqrt{4\delta/L}$, leading to the bound,
\begin{equation}
\label{eq:local_SQ_grad_bound}
    \mathbb{E}_{\theta \sim \mathbb{P}} \left[\left| \nabla_{\theta_i} h(\theta) \right|\right]  \leq 2\sqrt{L\delta}
\end{equation}
Thus, we have shown that, the difference in the gradients between training under LI target states $\rho$ and $\rho_{\text{ref}}$ remains super-polynomially small. This is the condition Eq.~\eqref{eq:LI_LSQDIFF} we need to prove hardness of learning LI states. Let $\mathbb{Q}$ be the distribution of the training trajectory $\{\theta(t)\}$ that targets the distribution $\rho$, the difference $||\mathbb{P}-\mathbb{Q}||_1$ is super-polynomially small, indicating hardness of learning LI states.

Finally, we comment on the relation to local SQ learning. First, recall that a noisy gradient descent training can be described by local SQ if $\nabla_{\theta}\log p_{\theta}$ is a local operator. Our core assumption that the state $\rho_{\theta(t)}$ is quasi-local Gibbsian does \text{not} imply $\nabla_{\theta}\log p_{\theta}$ is local without assuming further details on the architecture of the NN (see next subsection for a particular architecture where this does hold). Instead, the correct statement is that the training trajectory can be \textit{simulated} through local SQ oracles. To see this, we can explicitly restrict our ansatz to be local Gibbsian. Let us denote the new ansatz as $\sigma_{\tilde{\theta}}$. As we have shown, training trajectories targeting both $\rho$ is also local Gibbsian with a large probability, the training with $\sigma_{\tilde{\theta}}$ reproduces the same training trajectory up to negligible probability. Furthermore, $\nabla_{\tilde{\theta}}\log\sigma_{\tilde{\theta}}$ is explicitly a local operator. Therefore, training with this new ansatz is explicitly local SQ and simulates the original noisy gradient descent training.






\subsection{Spatially local SQ from Markovian RNN}
The state of an RNN is described by two functions $f_{\theta}$ and $g_{\theta}$, 
\begin{equation}
\begin{aligned}
    h_i &= f_{\theta}(h_{i-1}, s_{i-1}) \\
    p(s_i | h_i) &= g_{\theta}(s_i,h_i)
\end{aligned}
\end{equation}
where $\theta$ is the parameters of the RNN. The function $f_{\theta}$ determines the next hidden variable and the function $g$ determines the probability of output $s_i$ based on the hidden variable. The normalization $\sum_{s_i} p(s_i|h_i) = 1$ is usually enforced by a softmax function. The model is autoregressive, i.e., it gives a sample of each $s_i$ according to the distribution $p(s_i|h_i)$ and passes the result to the next cell. 

The dynamics of RNN is usually classified in three regimes: contractive, chaotic or edge of chaos based on the spectral norm of $\frac{\partial h_i}{\partial h_{i-1}}$. The contractive regime is defined as
\begin{equation}
    \left|\left| \frac{\partial h_i}{\partial h_{i-1}}\right|\right|< 1
\end{equation}
If the spectral norm is greater than 1, the RNN is known to be in the chaotic regime, leading to potential instability in training due to gradient blowing up. The edge-of-chaos regime has spectral norm close to 1, which has much more complicated dynamics. A convenient initialization in the contractive regime is to set the weights $\theta$ in $f_{\theta}$ to be small, which is adopted in NQS training. The sub-multiplicative property of the spectral norm indicates
\begin{equation}
   \left|\left| \frac{\partial h_{i+k}}{\partial h_i} \right|\right| \leq \prod_{j=i}^{i+k-1} \left|\left| \frac{\partial h_{j+1}}{\partial h_{j}}\right|\right|
\end{equation}
For long distances, the dependence of hidden variables is negligible. Therefore, an RNN in contractive regime can be described with a variable-length Markov machine - the distribution $p(s)$ has a finite Markov length. In this appendix, we further make an approximation: we choose $k$ sufficiently large such that the dependence can be neglected (e.g., due to finite numerical precision). 
\begin{equation}
\label{eq:markovrnn}
    \left|\left| \frac{\partial h_{i+k}}{\partial h_i} \right|\right| = 0
\end{equation}
This can be viewed as a coarse-graining of contractive RNNs, which we call Markovian RNNs. We show that training the RNN under this assumption is described by local SQ oracles.

Recall that in the unsupervised learning setup, the loss function $\mathcal{L}(\theta) = -\tr p_{\mathrm{target}}\log p_{\theta}$. At each training step, one computes the gradient $\nabla_{\theta}\mathcal{L}(\theta)$ by making an SQ oracle that returns the approximate expectation value of $\phi(s) = \nabla_{\theta} \log p_{\theta}(s)$. We show that $\phi(s)$ is a sum of local operators under the condition Eq.~\eqref{eq:markovrnn}. First, the full probability distribution is the product of the conditionals,
\begin{equation}
    p_{
    \theta
    }(s) = g_{\theta}(s_1, h_1) g_{\theta}(s_2, h_2) \cdots g_{\theta}(s_n,h_n).
\end{equation}
Using the chain rule, we have
\begin{equation}
    \nabla_{\theta} \log p_{\theta}(s) = \sum_{i=1}^n \frac{\partial}{\partial\theta}\log g_{\theta}(s_i, h_i) + \frac{\partial}{\partial h} \log g_{\theta}(s_i, h_i) \frac{d h_i}{d \theta}.
\end{equation}
where back propagation gives a recursive formula
\begin{equation}
\frac{d h_{i}}{d \theta} = \left( \frac{\partial h_{j}}{\partial h_{j-1}} \right)\left( \frac{d h_{j-1}}{d \theta} \right) +\frac{\partial f_{\theta}(h_{j-1}, s_{j-1})}{\partial \theta}.
\end{equation}
We can further iterate the back propagation $k$ steps, 
where the propagation stops due to Eq.~\eqref{eq:markovrnn}. Thus, $\frac{d h_i}{d \theta}$ only depends on $s_{i-k}$ to $s_{i-1}$. Thus, $\nabla_{\theta} \log p_{\theta}(s)$ is a sum of $k+1$-spatially local terms, justifying $s$-local SQ model.  
\section{Syndrome distribution of repetition code: MPDO representation}
Let us consider the syndrome distribution $q_p(s_1,s_2,\cdots s_n)$ of the repetition code with bit flip noise rate $p$. Each syndrome bit $s_i = (1 - b_i b_{i+1})/2 $, where $b_i = 1$ with probability $1-p$ and $b_i=-1$ with probability $p$ independently. We will use a MPDO to represent the classical state $\rho_p = \sum_{s} q_p(\vec{s})|\vec{s}\rangle \langle \vec{s}|$.

We show that the MPDO is a periodic uniform with bond dimension 2. The tensors are 
\begin{equation}
\begin{aligned}
    M_{00} &= (1-p) |0\rangle \langle 0| \\
    M_{01} &= p |1\rangle \langle 1| \\
    M_{10} &= (1-p) |1\rangle \langle
    1| \\
    M_{11} &= p |0\rangle\langle 0|
\end{aligned}
\end{equation}
Note that the indices are bond indices and the bra/ket is on physical indices. Since this represents a classical distribution, we don't have off-diagonal terms such as $|0\rangle \langle 1|$. Equivalently, we can write down the tensors by fixing the physical indices (syndromes) to be $|0\rangle\langle0|$ or $|1\rangle\langle1|$, 
\begin{equation}
M^{00} = \begin{bmatrix}
1 - p & 0 \\
0 & p
\end{bmatrix}
\quad \text{and} \quad
M^{11} = \begin{bmatrix}
0 & p \\
1 - p & 0
\end{bmatrix}
\end{equation}
Note that $M^{01} = M^{10} = 0$ again because the state is classical.  This gives the usual MPDO expression
\begin{equation}
\label{eq:MPDO}
    \rho_p = \mathrm{tr}(M^{s_1 s_1} \cdots M^{s_N s_N}) |s_1\cdots s_N\rangle\langle s_1 \cdots s_N|
\end{equation}

To see that the above MPDO tensors give the syndrome distribution $q_p$, we use the following celluar automata picture for the bond contraction.     
The celluar automata has two internal states: (1) State O: the site $i$ is not flipped, i.e., $b_i = 1$. (2) State I: the site $i$ is flipped, i.e., $b_i = -1$. The transition rules is that (1) O to O: we do not flip the site $i$ and the syndrome bit is $s_i = 0$. This gives $M_{00} = (1-p)|0\rangle\langle 0|$. (2) O to I: we flip the spin $i$ but do not flip spin $i-1$, resulting in syndrome bit $s_i = 1$. This gives $M_{01} = p|1\rangle\langle 1|$ (3) I to O: we do not the flip site $i$, giving a factor of $1-p$. Since the previous spin is flipped, the syndrome bit is $s_i = 1$. This gives $M_{10} = (1-p)|1\rangle\langle 1|$. (4) I to I: we flip the site $i$, giving a factor of $p$. Since the previous spin is also flipped, the syndrome bit is $s_i = 0$. This gives $M_{11} = p|0\rangle\langle 0|$. 


As a sanity check, if we choose $p=1/2$, then $M^{00} = I/2$ and $M^{11} = X/2$. As $X^2=I, \mathrm{tr}(I)=2, \mathrm{tr}(X)=0$, we are getting a global parity function
\begin{equation}
    \rho_{1/2} = \frac{1+\prod_{i=1}^N Z_i}{2^{N-1}}.
\end{equation}
The parity constraint, that is, $\prod_i s_i = 1$ is \textit{always} present for all $p$. This is because $M^{00}$ is diagonal and $M^{11}$ does not contain diagonal elements. To be more explicit, any matrix multiplication with an odd number of $M^{11}$ cannot contain diagonal elements, thus the trace vanishes.

Finally, we note that the MPDO can be transformed into an RNN of hidden dimension 4 using the map in Ref.~\cite{Wu_2023_MPSRNN}. Therefore, the syndrome distribution also admits an RNN representation.

\section{Syndrome distribution of repetition code: toy model for training}
Now let us consider a two-parameter ansatz $\rho_{p,r} = \mathcal{N}_r (\rho_p)$, where $\mathcal{N}_r$ is applying bit flip noise with probability $r$ on each \textit{syndrome bit}. This family of states represent a strong symmetry breaking perturbation to the target state $\rho_{p}$, while changing $p$ is a strong symmetry preserving perturbation. The density matrix $\rho_{p,r}$ is described by the following the MPDO ansatz with bond dimension 2,
\begin{equation}
\begin{aligned}
    M^{00}_{p,r} &= (1-r) M^{00} + rM^{11}  \\
    M^{11}_{p,r} &= r M^{00} + (1-r) M^{11}.
\end{aligned}
\end{equation}
Thus $\rho_{p,r}$ can also be represented with an RNN with hidden dimension 4. Instead of considering a full RNN training, let us consider the training this two-parameter ansatz $\rho_{p,r}$ by minimizing the loss function
\begin{equation}
    \mathcal{L}(p,r;p_0)=S(\rho_{p_0}||\rho_{p,r})
\end{equation}
The true minimum is achieved by $p^{*} = p_0,~ r^{*} = 0$.  At initialization of the training, we assume the state is initialized at $r\approx 1/2$, which corresponds to near maximally mixed state.
\subsection{Infinite Markov length $\xi = \infty$}
Let us first consider the fixed-point state, $p_0 = 1/2$ and $p=1/2$. We can compute the KL divergence analytically,
\begin{equation}
    \mathcal{L}(p=1/2,r;p_0=1/2)= \log \frac{2}{1+ (1-2r)^n}
\end{equation}
Consider the derivative with respect to $r$,
\begin{equation}
    \frac{\partial}{\partial r}\mathcal{L}(p=1/2,r;p_0=1/2) = \frac{2n(1-2r)^{n-1}}{1+ (1-2r)^n}
\end{equation}
Note that this gradient is exponentially small in $n$ if $r\neq 0$,
\begin{equation}
  \left|\frac{\partial}{\partial r}\mathcal{L}(p,r;p_0=p)  \right| = e^{-O(n)}, ~\mathrm{if~} p=\frac{1}{2}, r\neq 0
\end{equation}
Therefore, if the initialization is not exactly the target state, it cannot reach the target state in polynomial number of gradient step, even though the ansatz can represent the target state. Since $\mathcal{N}_r$ is a symmetry breaking perturbation, this shows that the presence strong symmetry in $\rho_{1/2}$ cannot be efficiently learned. The result agrees with the hardness of local SQ learning in App.~\ref{sec:AppE} and the hardness of general SQ learning in App.~\ref{sec:genSQ}.
\subsection{Finite Markov length}
Next, we consider the target state with $p = \frac{1}{2}-\delta p$, which has finite Markov length $\xi$. Note that the mixed-state RG analysis in Ref.~\cite{sang2024mixed} yields
\begin{equation}
    \xi \sim \left(\frac{1}{2}-p \right)^{-2}
\end{equation}
We again consider gradient of the loss function and show that the gradient $\partial \mathcal{L}/\partial r|_{r=1/2}$ is proportional to $n (\frac{1}{2}-p)^4$, which gives the scaling
\begin{equation}
\label{eq:divL0}
    \left|\frac{\partial}{\partial r}\mathcal{L}(p,r;p_0=p)|_{r=1/2}  \right| \propto \frac{n}{\xi^2} 
\end{equation}
To show this, we note that at $r=1/2$, the state $\rho_{p,r} = 2^{-n}I$ and therefore 
\begin{equation}
\label{eq:divL1}
     \left|\frac{\partial}{\partial r}\mathcal{L}(p,r;p_0=p)|_{r=1/2}  \right| = 2^n \sum_{s} \rho_{p}(s) \left.\frac{\partial \rho_{p,r}(s)}{\partial r}\right|_{r=1/2} 
\end{equation}
Let $r = \frac{1}{2}-\delta r$, we can expand the MPDO in the first order to obtain an MPDO tangent state, which gives
\begin{equation}
\label{eq:divL2}
    \rho_{p,r}(s) = 2^{-n} +  2^{-n+1} (1-2p)^2 (n- 2|s|)\delta r + O(\delta r^2)
\end{equation}
where $|s|$ is the Hamming weight of the bitstring $s$.
Combining Eqs.~\eqref{eq:divL1} and \eqref{eq:divL2} we obtain
\begin{equation}
    \left|\frac{\partial}{\partial r}\mathcal{L}(p,r;p_0=p)|_{r=1/2}  \right| = 8\delta p ^2 \sum_{s} \rho_p(s) (n-2|s|)
\end{equation}
The sum $\sum_{s}\rho_p(s)|s|$ is simply the average number of syndromes in the distribution $\rho_p(s)$. We can expand it to the second order in $\delta p$ as 
\begin{equation}
    \sum_{s}\rho_p(s)|s| = 2np(1-p) = \frac{n}{2}\left( 1-4\delta p^2 \right)
\end{equation}
Combining the two equations above we seed that the derivative is on the order of $O(n\delta p ^4)$, which gives Eq.~\eqref{eq:divL0}. The result implies that learning the parity constraint is polynomially hard in the Markov length $\xi$.

Note that we get a better training cost than the local SQ model (which is exponentially in $\xi$, see App.~\ref{sec:AppE}). This is because the number of variational parameters is much smaller, so we do not need full tomography of local marginal distributions. 

\section{SQ hardness of SWSSB states}
\label{sec:genSQ}
In this section we show that some of the hard-to-learn states we studied are not only hard in the local SQ model, but also hard to learn in the unrestricted SQ model where the query function is arbitrary. We first review the SQ oracles and SQ dimensions for decision problems over distributions. Then we generalize Kearns' proof of SQ hardness of parity functions to distributions. Note that the parity distribution $\rho_{1/2}$ is exactly the 0-form SWSSB fixed-point state. Finally, we consider a class of 1-form SWSSB states with zero correlation length, i.e., the loop ensemble with rigid constraint, and show that they are also hard under SQ. 
\subsection{Review of SQ learning of distributions}
The original SQ is formulated in terms of learning boolean functions \cite{kearns98,Aslam93SQ,Blum94sqdim,Blum03SQ,feldman12a,reyzin2020statisticalqueriesstatisticalalgorithms}. Here we are rather interested in its generalization \cite{feldman2016statisticalalgorithmslowerbound,diakonikolas2022optimalsqlowerbounds,diakonikolas2024statisticalquerylowerbounds} in terms of learning a distribution $p(x), x\in \mathbb{Z}^n_2$. We begin by reviewing the definition of the SQ oracle:
\begin{definition}[SQ oracle]
Given a distribution $p(x)$ of bitstrings, the learner has access to an SQ oracle $\mathsf{STAT}(\tau)$, where a query function $\phi(x): \mathbb{Z}^{n}_2\rightarrow [-1,1]$ is given. The oracle returns a value $v$ such that $v \in [\mathbb{E}_{x\sim p}(\phi(x))-\tau,\mathbb{E}_{x\sim p}(\phi(x))+\tau]$. 
\end{definition}
Our definition of $l$-local SQ oracle is a restriction of SQ oracle as it only allows for local query functions. We note that the SQ model describes almost all learning algorithms so far, including batch SGD in neural network states. In order to formulate learning hardness, we consider a class of decision problems.
\begin{definition} [Decision problems]
    We are given a distribution $q(x)$ and a set of distributions (a.k.a., a concept class)  $\mathcal{D} = \{p_i(x)\}$ that does not contain $q(x)$. We are given samples of either (1) $q(x)$ or (2) some $p_i(x)\in \mathcal{D}$. The problem is to decide whether (1) or (2) is true based on SQ oracles. 
\end{definition}
A decision problem is easy if the correct answer can be obtained with a probability at least $2/3$ with $T=\mathrm{poly}(n)$ SQ oracles and $\tau = 1/\mathrm{poly}(n)$. Otherwise the problem is hard. A key quantity to decide the hardness is the cross $\chi^2$-correlation matrix
\begin{equation}
    \chi_q(p_i, p_j) = \sum_{x} \left(\frac{p_i(x)}{q(x)}-1\right)\left(\frac{p_j(x)}{q(x)}-1\right)
\end{equation}
Note that if $i=j$, this reduces to the usual $\chi^2$-distance between the two distributions $p_i(x)$ and $q(x)$, which characterize their statistical distinguishability from samples. The correlation matrix determines the \textit{SQ dimension} $d$ that determines the hardness of SQ learning. 
\begin{definition} [SQ dimension]
    Given $\beta, \gamma>0$, the SQ dimension $d(\beta,\gamma)$ of a decision problem is given by the cardinality of the maximal subset $\mathcal{D}_S(\beta,\gamma)$ of $\mathcal{D}$, such that for any $p_i,p_j \in \mathcal{D}_S(\beta,\gamma)$, $|\chi_q (p_i, p_j)| \leq \gamma ~(i\neq j)$ and $|\chi_q (p_i, p_i)| \leq \beta$.
\end{definition}
The main result of Ref.~\cite{feldman2016statisticalalgorithmslowerbound} is an information-theoretic lower bound on the number $T$ of SQ oracles needed for a decision problem.
\begin{theorem}[Lemma 3.10 of Ref.~\cite{feldman2016statisticalalgorithmslowerbound}]
\label{thm:SQDIM}
    A decision problem is successful with probability $2/3$ only if we have $T\geq \frac{d\gamma}{\beta}$ oracles with tolerance $\tau = \sqrt{2\gamma}$, where $d = d(\beta,\gamma)$ is the SQ dimension.
\end{theorem}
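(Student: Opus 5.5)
The plan is the standard adversary (simulation) argument of Feldman et al., adapted to distributions. Fix any SQ algorithm $\mathcal{A}$ that makes $T$ queries with tolerance $\tau=\sqrt{2\gamma}$. Run $\mathcal{A}$ against an adversarial oracle that always answers every query $\phi$ with the exact reference value $v=\mathbb{E}_{x\sim q}\phi(x)$. Along this run the algorithm is deterministic (or we fix its coins), so it produces a fixed sequence of queries $\phi_1,\dots,\phi_T$ and a fixed final answer. If the algorithm succeeds, that answer must be ``(1): the distribution is $q$''.

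The algorithm can then only be correct on every $p_i\in\mathcal{D}_S(\beta,\gamma)$ if, for each such $p_i$, at least one query $\phi_k$ satisfies $|\mathbb{E}_{p_i}\phi_k-\mathbb{E}_q\phi_k|>\tau$. Otherwise the same answers are valid responses for $p_i$, and the algorithm wrongly outputs (1). So it suffices to show that each single query $\phi$ can ``catch'' only few elements of $\mathcal{D}_S$. Summing over the $T$ queries then gives $T\cdot(\text{max caught per query})\geq d$.

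The main step is the per-query bound. Write $\mathbb{E}_{p_i}\phi-\mathbb{E}_q\phi=\langle \phi-\mathbb{E}_q\phi,\ \hat p_i\rangle_q$, where $\hat p_i=p_i/q-1$ and $\langle f,g\rangle_q=\sum_x q(x)f(x)g(x)$. Note that $\langle \hat p_i,\hat p_j\rangle_q$ is exactly $\chi_q(p_i,p_j)$, interpreting the cross $\chi^2$ sum as weighted by $q$. Also $\|\phi-\mathbb{E}_q\phi\|_q\le 1$.

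Suppose $\phi$ catches a set $S$ of size $m$. Choose signs $s_i$ so that each term $s_i\langle\phi,\hat p_i\rangle_q$ is positive. Then by Cauchy–Schwarz
\begin{equation}
m\tau< \Big\langle \phi-\mathbb{E}_q\phi,\ \sum_{i\in S}s_i\hat p_i\Big\rangle_q\le \Big\|\sum_{i\in S}s_i\hat p_i\Big\|_q\le\sqrt{m\beta+m^2\gamma}.
\end{equation}
Substituting $\tau^2=2\gamma$ gives $2m^2\gamma<m\beta+m^2\gamma$, hence $m<\beta/\gamma$.

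Combining the two pieces, $d\le T\beta/\gamma$, i.e. $T\ge d\gamma/\beta$.

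To handle randomized algorithms with success probability $2/3$, first fix the random coins. On the adversarial run, the algorithm must either answer (1) or fail. A union/averaging argument then shows that with probability at least $2/3$ over the coins the caught set must cover all of $\mathcal{D}_S$. Any constant loss here is absorbed, since the lemma is stated up to this form.

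I expect the main obstacle to be the bookkeeping between the unnormalized $\chi$ definition given here and the $q$-weighted inner product. I will take the $\chi_q$ in the definition to mean the $q$-weighted sum, which is the convention of the cited reference. A secondary point is tracking the constant in $\tau=\sqrt{2\gamma}$; the computation above shows that this choice is exactly what makes the $\gamma$ term cancel cleanly.
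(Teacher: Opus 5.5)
The paper gives no proof of this statement; it only cites Feldman et al. Your deterministic argument is the standard one and is correct. You answer every query with $\mathbb{E}_q\phi$ and use the identity $\mathbb{E}_{p_i}\phi-\mathbb{E}_q\phi=\langle\phi-\mathbb{E}_q\phi,\hat p_i\rangle_q$. The signed Cauchy--Schwarz bound $m\tau<\sqrt{m\beta+m(m-1)\gamma}$ together with $\tau^2=2\gamma$ then shows that each query catches fewer than $\beta/\gamma$ distributions, so $T\ge d\gamma/\beta$. Your reading of $\chi_q$ as the $q$-weighted sum is also the right one. The paper's own later computation $\chi_q(p_S,p_{S'})=\delta_{SS'}$ for uniform $q$ holds only with that weight; without it the value would be $2^{2n}\delta_{SS'}$.

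The gap is the randomized step. Your claim that ``with probability at least $2/3$ over the coins the caught set covers all of $\mathcal{D}_S$'' does not follow. Success probability $2/3$ is guaranteed separately for $q$ and for each $p_i$, and nothing prevents an algorithm from catching only a random fraction of $\mathcal{D}_S$ on each run. A union bound over $d$ failure events goes the wrong way.

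What your adversary actually yields is the following. Let $C_r$ be the set caught by the $q$-run with coins $r$. Consider the oracle for $p_i$ that answers $\mathbb{E}_q\phi$ whenever $|\mathbb{E}_{p_i}\phi-\mathbb{E}_q\phi|\le\tau$ and $\mathbb{E}_{p_i}\phi$ otherwise. This oracle is valid for $p_i$, and for $p_i\notin C_r$ it reproduces the $q$-run exactly. Hence $\Pr_r[p_i\notin C_r]\le \Pr[\text{error on }q]+\Pr[\text{error on }p_i]\le 2/3$, i.e.\ $\Pr_r[p_i\in C_r]\ge 1/3$. Summing over $i$ and using $|C_r|<T\beta/\gamma$ for every $r$ gives only $T>d\gamma/(3\beta)$.

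The statement $T\ge d\gamma/\beta$ has no slack constant, so this factor cannot be ``absorbed.'' Boosting the success probability does not recover it either, since boosting multiplies the query count. As written, your proof establishes $T\ge d\gamma/\beta$ for deterministic algorithms, which must be correct on every input. For algorithms succeeding with probability $2/3$ it establishes only $T\ge d\gamma/(3\beta)$. You should either restrict the claim accordingly or give a genuinely different argument for the randomized case with the exact constant.
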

Therefore, if a decision problem has an exponentially large SQ dimension for $\tau=1/\mathrm{poly}(n)$, then it is hard under SQ. Intuitively, an exponentially large SQ dimension means that a generic query function has an exponentially small signal because the distributions in the concept class $\mathcal{D}$ are almost orthogonal. 

\subsection{Hardness of parity distribution}

Now we prove that learning parity distribution
\begin{equation}
    p(x) = \frac{1+x_1 x_2\cdots x_n}{2^n}
\end{equation}
is hard. In order to show it, we use the doubling trick~\cite{abbe2022learningreasonneuralnetworks} to formulate a hard decision problem. Let $r(x) = 2^{-n}$ be the uniform distribution and  consider a $2n$-bit distribution 
\begin{equation}
    p_0(x,y) :=p(x)r(y).
\end{equation}
we construct a set $\mathcal{D} = \{p_S|S \in \mathbb{Z}^n_2\}$ such that
\begin{equation}
    p_S(x,y) = \left(\prod_{S_i = 1} \mathrm{SWAP}_{x_i,y_i} \right)p_0(x,y) 
\end{equation}
The operator in front of $p_0(x,y)$ means swapping all $x_i$'s into $y_i$'s if the $i$-th bit of $S$ is $1$. Note that $p_0\in \mathcal{D}$ as it corresponds to $S = 00\cdots 0$. The set of states essentially hides the $n$-bit parity constraint into of the total $2n$-bits.
Defining
\begin{equation}
    q(x,y) = r(x)r(y)
\end{equation}
to be the uniform distribution on $2n$ bits. We have a decision problem to tell whether a set of $2n$-bit samples comes from $q(x,y)$ or some distribution in $p_S(x,y)$. 

Now we show that the SQ dimension of this problem is exponentially large. We can compute 
\begin{equation}
    \chi_q(p_S,p_{S'}) = \delta_{SS'}
\end{equation}
because of the orthogonality of different boolean Fourier modes. Choosing $\beta = 1$ and $\gamma = 1/\mathrm{poly}(n)$, the whole set $\mathcal{D}$ satisfies that $|\chi_q (p_i, p_j)| \leq \gamma ~(i\neq j)$ and $|\chi_q (p_i, p_i)| \leq \beta$. Therefore, the SQ dimension is $d= |\mathcal{D}| = 2^{n}$. Then using theorem~\ref{thm:SQDIM} we need at least $T\geq 2^{n}/\mathrm{poly}(n)$ SQ oracles, establishing the hardness of the decision problem.

Finally, the hardness of the decision problem indicates the hardness of learning the distribution $p(x)$. One crucial assumption to make is that learning $p_S(x,y)$ is equally hard for all $S\in \mathbb{Z}^{n}_2$ with a random initialization of neural network, because they are related to $p_0$ by local reshuffling. Suppose learning $p(x)$ is easy, then learning $p_0(x,y)$ is also easy because we are just adding uniformly random bits. By the assumption, learning $p_S(x,y)$ is equally easy with random initializations. If the samples come from a distribution in $\mathcal{D}$, then this implies that we can use a polynomial amount of SQ oracles to determine that the sample indeed comes from a certain distribution in $\mathcal{D}$, solving the decision problem with certainty. However, the SQ hardness of the decision problem implies that this cannot be decided efficiently. This gives a contradiction and thus learning $p(x)$ is hard.

\subsection{Hardness of loop ensemble with rigid constraints}

\begin{figure}[t]
    \centering
    \includegraphics[width=\linewidth]{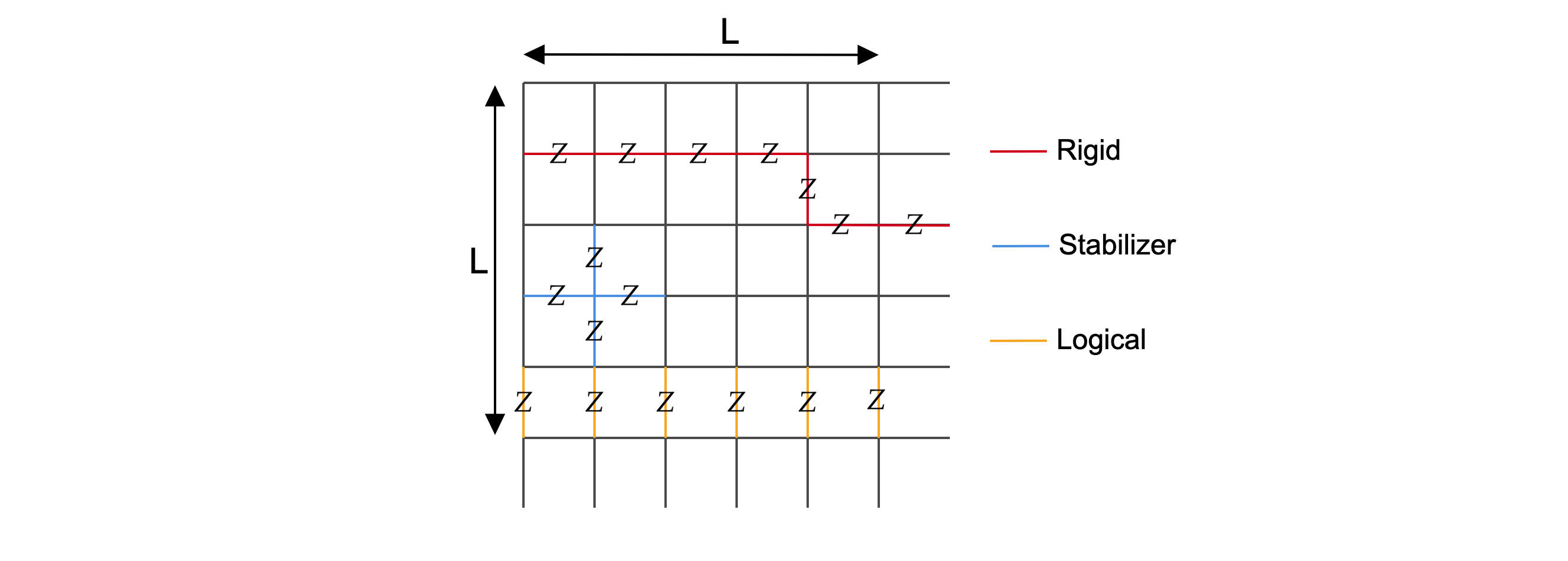}
    \caption{\textbf{Constraints in the loop model.} The logical constraints can be moved around by applying stabilizers. The rigid constraints are fixed on the lattice and cannot be moved by stabilizers.}
    \label{fig:LoopModelDiagram}
\end{figure}

Consider a 2D square lattice of sites $L\times  L$ with spin degrees of freedom living on the edges. For each vertex $v$, we can enforce the vertex constraint (a.k.a. stabilizers) $Z_v :=\prod_{i\in v} s_i = 1 $. The loop ensemble is the maximally mixed states with vertex constraint enforced.
\begin{equation}
    p_{\mathrm{loop}}(s)  = \frac{\prod_{v} (1 + Z_{v})}{2^{n}},
\end{equation}
where $n=2L^2$ is the number of spins. The distribution is an equal weight ensemble of closed loops, where the each $s_i=-1$ composes an edge to the loop. 

As noted in Ref.~\cite{sang2025mixedstatephaseslocalreversibility}, there are two types of constraints we could add to the $p_{\mathrm{loop}}(s)$ and the resulting state is still locally indistinguishable from $p_{\mathrm{loop}}(s)$. The first type is a logical constraint. There are only two logical operators $L_1$ and $L_2$. $L_1$ is the product of spins on the horizontal edges on the same vertical line. Analogously, $L_2$ is the product of spins on the vertical edges on the same horizontal line. The logical operator $L_1$ ($L_2$) can be equivalently defined on different vertical (horizontal) lines, as they are related by multiplying stabilizers.  Adding a logical constraint $L_1 = l_1 \in \{1,-1\}$ or $L_2 = l_2 \in \{1,-1\}$ results in a restricted ensemble
\begin{equation}
    p_{l_1 l_2} = p_{\mathrm{loop}}(s) (1 + l_1 L_1)(1 + l_2 L_2)
\end{equation}
Enforcing $L_1 = \pm 1$ constrains the loops in the ensemble to cross an even number of  
These four states are locally indistinguishable from $p_{\mathrm{loop}}$, thus they cannot be efficiently learned with \textit{local} SQ oracles.

The second type of constraint is rigid constraints. The rigid operator $R_i$ is defined as a product of spins on the edges that form a topologically nontrivial closed loop. Different loops correspond to different constraints as they are not related by multiplying stabilizers. We can define a loop ensemble with rigid constraint as
\begin{equation}
    p_i(s) = p_{\mathrm{loop}}(s) (1 + R_i)
\end{equation}
This constrains the loop configurations to cross an even number of times from the spins on $R_i$. Again, $p_i(s)$ is locally indistinguishable from $p_{\mathrm{loop}}(s)$.

Now we show that the decision problem with $q(s) = p_{\mathrm{loop}}(s)$ and $\mathcal{D} = \{p_i(s)\}$ is SQ hard. We can compute the $\chi^2$ correlation matrix
\begin{equation}
    \chi_q(p_i, p_j) =\langle R_i R_j\rangle_q
\end{equation}
where $R^2_i = 1$ and $R_i R_j$ is the product of spins on a topologically trivial loop if $i\neq j$. However, the since the vertex constraints are the only stabilizers in $q(s)$ and the loop constraints are independent from the vertex constraints,  $\langle R_i R_j\rangle_q = 0$ if $i\neq j$. Therefore, for the same reason as the parity functions, the SQ dimension $d = |\mathcal{D}|$, which is exponentially large in $L$, thus, the decision problem is SQ hard based on theorem~\ref{thm:SQDIM}.

\section{KL subtraction scheme for learning loop model under noise}
Consider the loop ensemble with fixed logical constraint $\rho_{l_1 l_2}$. Under the bit-flip noise, the state $\rho_{l_1l_2,p} = \mathcal{N}_p (\rho_{l_1 l_2})$ undergoes a decoding phase transition at $p_c\approx 0.109$. For $p<p_c$, states with different $l_1,l_2$ are locally indistinguishable. Therefore, local SQ predicts a learning hardness in for a fixed logical sector under $p<p_c$. In order to single out the hardness of learning the logical constraint from numerics, we have to take into account the hardness of learning the loop ensemble \textit{without} the logical constraints. Indeed, we find numerically that even if the target state is $\rho_{\mathrm{loop}}$ without logical constraints, we still get non-negligible $S(\rho_{\mathrm{loop}}||\rho_{\theta})$ after training the transformer or CNN state $\rho_{\theta}$. This leads to the KL subtraction scheme that we use in the main text, which we justify below.

Consider first the case of $p=0$. Let $\rho$ be the learned distribution for target $\rho_{\mathrm{loop}}$ at the final training step. We assume that $\rho$ does not contain any logical information because the target state is maximally mixed in the logical subspace.  Due to LI and the local SQ assumption, the same state $\rho$ is learned for the target states $\rho_{l_1l_2}$ with the logical constraint. The KL subtraction gives
\begin{equation}
\label{eq:KLdiff}
    \begin{aligned}
        S(\rho_{l_1 l_2}|| \rho) - S(\rho_{\mathrm{loop}}|| \rho) &= \left(\frac{1}{4} \sum_{l_1,l_2}   S(\rho_{l_1 l_2}|| \rho)\right) - S(\rho_{\mathrm{loop}}|| \rho) \\
        &= S(\rho_{\mathrm{loop} })-\left(\frac{1}{4} \sum_{l_1l_2} S(\rho_{l_1 l_2})\right)  - \tr \left[\left(\frac{1}{4} \sum_{l_1l_2} \rho_{l_1 l_2}\right) - \rho_{\mathrm{loop}}\right]\log \rho \\
        &=S(\rho_{\mathrm{loop}}) -\left(\frac{1}{4} \sum_{l_1l_2} S(\rho_{l_1 l_2})\right)  \\
        & =  S(\rho_{\mathrm{loop}}) - S(\rho_{l_1 l_2}) \\
        & = 2\log 2
    \end{aligned}
\end{equation}
where in the first line we have used that $\rho$ contain no logical information so the KL for different logical sectors are equal, in the third line we have used that $\rho_{\mathrm{loop}} = \frac{1}{4}\sum_{l_1 ,l_2} \rho_{l_1 l_2}$ and in the fourth line we use $S(\rho_{l_1 l_2})$ being independent of the logical $l_1$ and $l_2$. Next we consider the other fixed point $p=1/2$. The logical information is completely lost, i.e., $\mathcal{N}_p(\rho_{l_1 l_2}) =\mathcal{N}_p(\rho_{\mathrm{loop}})$, and thus the KL difference becomes to zero. Finally, we show that the KL difference has a phase transition at $p=p_c$. For $p<p_c$, the whole phase can be connected by two-way finite-depth classical channels:
\begin{equation}
\mathcal{D}(\rho_{l_1 l_2, p}) = \rho_{l_1 l_2}, ~~\mathcal{D}(\rho_{\mathrm{loop},p}) = \rho_{\mathrm{loop}}
\end{equation}
Let $\rho_p$ be the trained distribution for the target $\rho_{\mathrm{loop},p}$. Since LI is preserved under two-way channels, the target states $\rho_{l_1 l_2, p}$ are still LI and we get the same distribution $\rho_p$ under training. The same argument that leads to Eq.~\eqref{eq:KLdiff} gives
\begin{equation}
    S(\rho_{l_1 l_2,p}||\rho_p) - S(\rho_{\mathrm{loop},p}||\rho_p) = S(\rho_{\mathrm{loop},p}) - S(\rho_{l_1l_2, p})
\end{equation}
This entropy difference is exactly $2\log 2$ in the thermodynamic limit for all $p<p_c$. To see this, we note that the entropy difference is exactly the Holevo information: how much classical information about $l_1, l_2$ that we have in the states $\rho_{l_1 l_2,p}$. More precisely, the entropy difference is the mutual information of a state $\rho_{LS}$
\begin{equation}
    S(\rho_{\mathrm{loop},p}) - S(\rho_{l_1l_2, p}) = I(L: S)_{\rho_{LS}}
\end{equation}
with
\begin{equation}
    \rho_{LS} = \frac{1}{4}\sum_{l_1,l_2} |l_1 l_2\rangle^{L}\langle l_1 l_2| \otimes \rho^{S}_{l_1 l_2, p}
\end{equation}
For $p<p_c$, the mutual information remains $I(L:S) = 2\log 2$ because the logical information is decodable. Therefore, in this phase the KL difference is always $2\log 2$. For $p>p_c$, we have $I(L:S) = 0$ in the thermodynamic limit, which implies that $||\rho_{l_1 l_2, p} - \rho_{\mathrm{loop},p}||_1 \rightarrow 0$ in the thermodynamic limit. Therefore, the KL difference vanishes in the nondecodable phase.

The same analysis applies to the surface code under decoherence, where the KL difference is $\log 2 $ in the decodable phase because there is only one logical bit.

\section{Relation to simplicity bias}
We can supplement the difficulty of learning locally indistinguishable states with an intuitive explanation of why infinite markov length states are hard to learn. The dynamics of gradient descent have been both empirically and theoretically studied by ML community. Various neural network architectures trained with gradient descent exhibit a "simplicity" bias \cite{simplicityBias, spectralBias}. While the traditional notion of simplicity bias refers broadly to a preference for solutions with lower complexity, we consider a specific form as a spectral or degree bias. This mechanism has been touted as a potential explanation for the generalization capability of deep networks trained with gradient based methods. The simplicity bias we discuss is a specialization to the boolean case:  a preference for lower-degree terms in the Fourier-Walsh expansion \cite{sensitivefunctionshardtransformers}.

For a (pseudo) boolean function $f: \{-1, 1\}^N \rightarrow \mathbb{R}$, the (unique) fourier decomposition is defined as the expansion of $f$ in the orthormal basis of parity functions over subsets ($S \subseteq [N]$) of spins $\chi_S(x) = \prod_{i \in S} \sigma_{i}$, 
\begin{equation}
    f(x) = \sum_{S \subseteq [N]} \hat{f}(S) \chi_S(x)
\end{equation}
The simplicity bias of neural networks can be naively framed as the statement that a trained neural network exhibits the following property: $\hat{f_\theta}(S) \rightarrow 0$ as $|S| \rightarrow N$. The property of gradient descent finding such "simple" solutions, and the general difficulties of neural networks to learn dense parities have been extensively studied for other learning paradigms \cite{abbe2022learningreasonneuralnetworks, abbe2025learninghighdegreeparitiescrucial}. 
Within the context of unsupervised distribution learning, boolean expansion of both $p(x)$ or $\log p(x)$ have nice physical interpretations. In the case of $p(x)$, the fourier coefficients are proportional to the spin moments, $\hat{p}(S) \propto \expval{\chi_S}$. For $\log p$, the boolean expansion essentially amounts to writing an effective hamiltonian for which $p$ is a gibbs state.
\begin{equation}
    \log{p(x)} = H_{\text{eff}}=  \sum_S \hat{c}(S) \chi_S(x)
\end{equation}

We claim that the simplicity bias in the context of unsupervised learning can be framed in terms of these effective hamiltonian coefficients. We hypothesize that autoregressive neural networks trained in this paradigm are biased towards gibbs states of hamiltonians with low degree terms. The architectural inductive biases determine the precise form this takes. In architectures where spatial locality is important such as shallow CNNs or recurrent neural networks, the bias is towards gibbs states of spatially local low-degree hamiltonians. Other architectures such as Transformers or deep CNNs may not have this spatial locality constraint and are biased towards gibbs states of low degree (``k-local'') hamiltonians without requiring spatial proximity of interacting spins. 


\bibliographystyle{apsrev4-2}
\bibliography{refs}
\end{document}